\documentclass[letterpaper]{article}
\usepackage[pdftex]{graphicx}
\graphicspath{{./figures/}{../figures/}{./}}
\usepackage{amsmath,amssymb,euscript,yfonts,psfrag,latexsym,dsfont,bbm,color,amstext,wasysym,balance,mathtools}
\usepackage{amsthm}
\newcommand{\var}{{\operatorname{var}}}
\newcommand{\Y}{{\mathcal Y}}
\newcommand{\X}{{\mathcal X}}
\newcommand{\Z}{{\mathcal Z}}
\newcommand{\JEnt}{{\rm JEnt}}
\newcommand{\Std}{{\rm Std}}
\newcommand{\SEnt}{{\rm SEnt}}

\newcommand{\by}{{\boldsymbol y}}
\newcommand{\bx}{{\boldsymbol x}}
\newcommand{\bz}{{\boldsymbol z}}

\newcommand{\bu}{{\boldsymbol u}}

\newcommand{\beps}{{\boldsymbol \epsilon}}
\newcommand{\bxi}{{\boldsymbol \xi}}

\newcommand{\mZ}{{\mathbb Z}}
\newcommand{\mR}{{\mathbb R}}
\newcommand{\cE}{{\mathcal E}}
\newcommand{\cT}{{\mathcal T}}
\newcommand{\cK}{{\mathcal K}}
\newcommand{\F}{{\mathcal F}}

\newcommand{\f}{{\operatorname{f}}}

\newtheorem{prop}{Proposition}
\newtheorem*{remark}{Remark}
\usepackage{url}

\begin{document}

\title{Minimum-entropy causal inference and its application in brain network analysis}

\date{}
\author{Lipeng Ning
\thanks{L. Ning is with the Psychiatry Neuroimaging Laboratory, Department
of Psychiatry, Brigham and Women's Hospital, Harvard Medical School, Boston, MA 02215 USA,
e-mail: lning@bwh.harvard.edu.}}

\markboth{Journal of \LaTeX\ Class Files,~Vol.~xx, No.~xx, October~2021}%
{L. Ning: ME causal inference and its application in brain network analysis}


\maketitle

\begin{abstract}
Identification of the causal relationship between multivariate time series is a ubiquitous problem in data science. 
Granger causality measure (GCM) and conditional Granger causality measure (cGCM) are widely used statistical methods for causal inference and effective connectivity analysis in neuroimaging research. 
Both GCM and cGCM have frequency-domain formulations that are developed based on a heuristic algorithm for matrix decompositions. 
The goal of this work is to generalize GCM and cGCM measures and their frequency-domain formulations by using a theoretic framework for minimum entropy (ME) estimation. The proposed ME-estimation method extends the classical theory of minimum mean squared error (MMSE) estimation for stochastic processes. It provides three formulations of cGCM that include Geweke's original time-domain cGCM as a special case. But all three frequency-domain formulations of cGCM are different from previous methods.
Experimental results based on simulations have shown that one of the proposed frequency-domain cGCM has enhanced sensitivity and specificity in detecting network connections compared to other methods. 
In an example based on \emph{in vivo} functional magnetic resonance imaging, the proposed frequency-domain measure cGCM can significantly enhance the consistency between the structural and effective connectivity of human brain networks.
\end{abstract}

\emph{Keywords} Granger causality measures, effective connectivity, structural connectivity, brain networks, state-space representation 


\section{Introduction}\label{sec1}
Causal inference between multi-variate time series is a fundamental problem in data science. It is particularly relevant in neuroscience research where time-series neuroimaging data, such as functional magnetic resonance imaging (fMRI), electroencephalogram (EEG), and magnetoencephalography (MEG), are used to investigate interactions between brain regions that form the so-called brain networks.
Brain cognition, function, and memory all involve dynamic and most likely asymmetric interactions between multiple brain regions \cite{Luo2013a,Sporns2013,Park2013,Tanimizu2017,Martinez2018}.
Identification of the direction of information flow in brain networks is critical to understanding the mechanism of brain functions and developing markers for brain diseases \cite{Stam2007,Fitzsimmons2013,Fornito2015}. 

The influence that a brain node exerts over another under a network model is usually referred to as effective connectivity (EC) \cite{Harrison2003}. 
Several methods have been developed for EC analysis of brain networks, such as dynamic causal modeling (DCM) \cite{Stephan2008,Friston2013}, and the Granger causality measure (GCM) and conditional GCM (cGCM) \cite{Granger1969,Geweke1982,Geweke1984,Chen2006,Solo2008,Barnett2009,Deshpande2009,Marinazzo2011,Deshpande2012,Seth2015,Wang2020}, and directed transfer function (DTF) \cite{Kaminski1991,Kaminski2001}.
DCM uses a deterministic multiple-input and multiple-output system with time-varying coefficients to model the dynamic coupling of the underlying variables. GCM and cGCM characterize the stochastic dependence between time series by quantifying the importance of past values of one variable to the prediction of another one.
 DTF is derived based on a similar model as GCM and cGCM to quantify the dependence of multivariate time series in the spectral domain. More detailed comparisons and explanations of these methods can be found in \cite{Kaminski2001,Friston2013,Friston2014}.

The goal of this work is to introduce an information theoretical framework for causal inference that extends existing GCM and cGCM methods. In this context, several related information-theoretic interpretations or generalizations have already been developed for GCM or cGCM including transfer entropy (TE) and directed information (DI)\cite{1057325,Solo2008,Barnett2009,Sciences2011,Vicente2011,Quinn2011,Amblard2011a,Amblard2011}.  
\cite{1057325} first pointed out the relation between entropy differences and Geweke's GCM by introducing a causality measure as the difference between the code length for encoding one variable using its own past values and the code length based on the joint past values with other variables. 
\cite{HLAVACKOVASCHINDLER20071} introduced several methods for causal inference based on mutual information and TE. \cite{HLAVACKOVASCHINDLER20071} and \cite{Solo2008} showed that the GCM between a pair of time series are also mutual information measures. \cite{Barnett2009} showed the equivalence between GCM and TE for Gaussian processes.
The equivalence between GCM and TE has also been extended to generalized Gaussian probability distributions \cite{Sciences2011}, model-free tools for effective connectivity analysis in neuroscience \cite{Vicente2011} and nonlinear causality detections \cite{doi:10.1098/rsos.200863}. 
In \cite{Quinn2011} introduced the DI measure to infer causal relationships in neural spike train recordings based on previous works in \cite{1057325}. DI is derived based on a modified mutual information measure and can be decomposed as the sum of the TE and a term that quantifies the instantaneous coupling \cite{Amblard2011a}.
More detailed overviews about GCM and related information-theoretic formulations can be found in  \cite{Seghouane2012,Amblard2013}.

The causality measures introduced in this work are developed based on a method of minimum entropy (ME) estimation which is different from the TE- or DI-based methods. 
The problem of ME estimation is similar to the classical minimum mean square error (MMSE) estimation where a linear dynamic filter is applied to one time series to predict another one based on a joint model to minimize the entropy rate of the residual process. 
The ME method not only provides an alternative information-theoretic interpretation of time-domain GCM or cGCM but also introduces the ME processes whose power spectral density (PSD) functions are used to derive frequency-domain measures. 
The ME-based method was first investigated in our previous work \cite{Ning2018} to generalize partial coherence analysis and frequency-domain GCM. 
This work further extends this framework to introduce three types of ME-based causality measures to quantify the conditional causality between multivariate time series. 
The relationship between the three methods analyzed and their performances are compared using both simulations and \emph{in vivo} resting-state fMRI (rsfMRI) data of human brains.

The organization of this paper is as below. Section \ref{sec:background} introduces preliminary knowledge on GCM, cGCM, and related information-theoretic formations such as TE, and DI.
Section \ref{sec:ME_cGCM} introduces the ME-based formulation of GCM, three ME-based formulations of conditional causality measures, and the relationship among these measures.
Section \ref{sec:experiment} introduces two experiments based on simulations and \emph{in vivo} rsfMRI data to demonstrate the performance of the proposed methods.
Section \ref{sec:results} presents the experimental results.
The discussion and conclusions are presented in Section \ref{sec:discussion} and Section \ref{sec:conclusions}, respectively.
The Appendix introduces the details of the computational algorithms.

\section{On GCM, cGCM, transfer entropy and directed information}\label{sec:background}

\subsection{Notations and background on power spectral analysis}
Let $\bu_t=(\bx_t;\by_t;\bz_t)\in \mR^n$ denote a zero-mean wide-sense stationary Gaussian process and $\bx_t\in  \mR^{n_x}, \by_t\in  \mR^{n_y}, \bz_t\in  \mR^{n_z}$, $n_x+n_y+n_z=n$.
For the sub-process $\bx_t$, let 
\begin{align}
    \Sigma_{xx} = \var(\bx \mid \X_{-})
\end{align}
denote the one-step ahead prediction error variance of $\bx_{t}$ given its past values $\X_{-}=\{\bx_{t-k}, k \in \mZ^+ \}$, where $ \mZ^+$ denotes the set of positive integers and the subscript $t$ is omitted for simplicity. 
Moreover, let $\X = \{\bx_t, \X_-\}$ denote the set of past and current values of $\bx_t$. Similarly, the $\Y_- (\Y)$ and $\Z_-(\Z)$ are used to denote the set of past (past and current) values of $\by_t$ and $\bz_t$, respectively.

The prediction error $\bx \mid \X_{-}$ follows a zero-mean Gaussian distribution and its differential entropy is equal to \cite{Papoulis1985}
\begin{align}\label{eq:h}
h(\bx \mid \X_{-}) = \frac12 n_x(1+\ln(2\pi)) + \frac12 \ln \det \Sigma_{xx}.
\end{align}
An alternative interpretation of $h(\bx_{t} \mid \X_{-})$ is provided by the Shannon entropy rate of the process $\bx_t$ which is defined as
\begin{align}\label{eq:entropy_rate}
h(\bx):= \lim_{m \rightarrow \infty} \frac{1}{m}h([\bx_{t}; \bx_{t-1}; \ldots; \bx_{t-m}]),
\end{align}
and satisfies that \cite{Kolmogorov1958,Mulherkar2018}
\begin{align}\label{eq:entropy_equal}
h(\bx)= h(\bx \mid \X_{-}).
\end{align}

Let $S_{\bx}(\theta)$ denote the PSD function of $\bx_{t}$ with $\theta\in [-\pi, \pi]$.
If $S_{\bx}(\theta)$ is positive definite for all $\theta\in [-\pi, \pi]$, then the Kolmogorov-Szeg\"{o} equation \cite{Wiener1957} indicates that 
\begin{align}\label{eq:OmegaPSD}
    \det( \Sigma_{xx} ) = \exp\left(\frac1{2\pi} \int_{-\pi}^{\pi} \ln \det S_{\bx}(\theta) d\theta\right),
\end{align}
which implies that $\det(\Sigma_{xx})$ is equal to the geometric mean of $\det S_{\bx}(\theta)$.

\subsection{Vector autoregressive representations}
For the joint process $\bu_t$, let $
G(L) \bu_t = \beps_t$,
denote the vector autoregressive (VAR) presentations where $\beps_t$ represents the optimal prediction error of $\bu_t$ based on its past values, $L$ denotes the lag operator such that $L^k \bu_t= \bu_{t-k}$ and $G(L) = I_{n} +\sum _{k=1}^\infty G_k L^k$.
Based on the decomposition of $\bu$, let the above VAR model be decomposed as
\begin{align}\label{eq:G_abc}
\left[  \begin{matrix}G_{xx}(L)& G_{xy}(L)& G_{xz}(L)\\  G_{yz}(L)& G_{yy}(L)& G_{yz}(L)\\  G_{zx}(L)& G_{zy}(L)& G_{zz}(L)\end{matrix} \right]\left[\begin{matrix} \bx\\ \by\\ \bz\end{matrix} \right]= \left[\begin{matrix} \beps_{x}\\ \beps_{y}\\ \beps_{z}\end{matrix} \right],
\end{align}
where the time index $t$ is omitted for notational simplicity and the covariance matrix of $\beps$ is denoted by
\begin{align}\label{eq:Omega_eps}
\cE(\beps \beps^T)=\Omega = \left[  \begin{matrix}\Omega_{xx}& \Omega_{xy}& \Omega_{xz}\\  \Omega_{yz}& \Omega_{yy}& \Omega_{yz}\\  \Omega_{zx}& \Omega_{zy}& \Omega_{zz}\end{matrix} \right] .
\end{align}

Assume that the subprocess $(\bx;\by)$ extracted from $\bu_t$ has the following VAR representation
\begin{align}\label{eq:VARy12}
    \left[\begin{matrix} \hat G_{xx}(L)& \hat G_{xy}(L) \\  \hat G_{yx}(L)& \hat G_{yy}(L) \end{matrix}\right] \left[\begin{matrix}\bx\\ \by \end{matrix} \right] = \left[\begin{matrix} \hat \beps_{x} \\ \hat \beps_{y}\end{matrix}\right],
\end{align}
where the covariance matrix of $(\hat \beps_{x}; \hat \beps_{y})$ is denoted by
\begin{align}\label{eq:Omega12}
\cE\left(\left[\begin{matrix} \hat \beps_{x} \\  \hat \beps_{y} \end{matrix}\right]
\left[\begin{matrix} \hat \beps_{x}^T &  \hat \beps_{y}^T \end{matrix}\right]
\right)
= \left[\begin{matrix} \hat \Omega_{xx} & \hat \Omega_{xy} \\ \hat \Omega_{yx} & \hat \Omega_{yy} \end{matrix} \right].
\end{align}
The symbol $\hat{.}$ distinguishes the notations in \eqref{eq:VARy12} and \eqref{eq:Omega12} from those in \eqref{eq:G_abc} and \eqref{eq:Omega_eps}. 
It should be noted that the VAR representation for the subprocess $(\bx;\by)$ should be consistent with the model of the joint process \cite{Barnett2015,10.1162/NECO_a_00828}. The state-state representation for the joint process $\bu$ and spectral factorization algorithms provide an approach to deriving consistent models of sub-processes which is explained in the Appendix.

{
\subsection{GCM, cGCM and the frequency-domain formations}

Based on the definitions, it clearly holds that $\det(\Sigma_{xx})\geq \det(\hat \Omega_{xx})$ since using the past values of both $\bx$ and $\by$ improves the prediction accuracy for $\bx$ compared to using only the past values of $\bx$. Similarly, $\det(\hat \Omega_{xx})\geq \det(\Omega_{xx})$ since adding the past values of $\bz$ further improves the prediction accuracy. The GCM and cGCM measures are derived based on the above inequalities to quantify the improvement in prediction accuracy by adding more variables. Specifically, the GCM from $\by$ to $\bx$ is defined as \cite{Geweke1982}
\begin{align}\label{eq:GCMstd}
\F_{\by\rightarrow \bx} &= \ln \frac{\det \Sigma_{xx}}{\det \hat \Omega_{xx}}.
\end{align}
The cGCM from $\bz$ to $\bx$ conditional on the past values of $\by$ is defined as \cite{Geweke1984}
\begin{align}\label{eq:FOrig}
\F^{\rm Std}_{\bz \rightarrow \bx | \by} = \ln \frac{\det \hat \Omega_{xx}}{\det \Omega_{xx}},
\end{align}
where the superscript $^{\rm Std}$ is used to distinguish the standard method in  \cite{Geweke1984} from the results introduced in the next section. 
It is noted that the cGCM $\F^{\rm Std}_{\by \rightarrow \bx | \bz}$ can be derived based on a joint model for $(\bx; \bz)$ similar to \eqref{eq:VARy12}.

Both GCM and cGCM have frequency-domain formulations introduced by Geweke in \cite{Geweke1982,Geweke1984}.
The frequency-domain GCM (fGCM) is defined as
\begin{align}\label{eq:Geweke_fGCM}
\f_{\by\rightarrow \bx}^{\rm Geweke}(\theta) &= \ln \frac{\det S_{\bx}(\theta)}{\det (S_{\bx}(\theta)- \hat H_{xy}(e^{i\theta}) \hat \Omega_{y|x} {\hat H_{xy}(e^{i\theta})}^* )},
\end{align}
where $\hat \Omega_{y|x}=\hat \Omega_{yy}-\hat \Omega_{yx}\hat \Omega_{xx}^{-1}\hat \Omega_{xy}$ and 
\begin{align}\label{eq:hatHL}
\hat H(L) =\left[\begin{matrix} \hat H_{xx}(L)& \hat H_{xy}(L) \\  \hat H_{yx}(L)& \hat H_{yy}(L) \end{matrix}\right] =\left[\begin{matrix} \hat G_{xx}(L)& \hat G_{xy}(L) \\  \hat G_{yx}(L)& \hat G_{yy}(L) \end{matrix}\right] ^{-1}.
\end{align}
The frequency-domain cGCM (fcGCM) is derived based on a similar matrix-decomposition algorithm which is provided in \eqref{eq:Geweke_fcGCM}.

It is noted that the denominator of $\f^{\rm Geweke}_{\by\rightarrow \bx}(\theta)$ is not the PSD function of a time series. 
The fGCM is defined based on a heuristic algorithm to satisfy that
\begin{align}\label{eq:fGCM_mean}
\frac{1}{2\pi} \int_{-\pi}^{\pi}    \f_{\by\rightarrow \bx}^{\rm Geweke}(\theta) d\theta = \F_{\by\rightarrow \bx}.
\end{align}
Several studies have shown that results based on fGCM or fcGCM are not consistent with physiological knowledge of brain activities \cite{Faes2017,Stokes2017}.
To overcome the limitations, \cite{Hosoya2017} has introduced a modified causality measure whose frequency-domain formulation is expressed in terms of PSD functions of two-time series derived from the joint process. The modified time-domain causality measure is different from the GCM and is always no larger than GCM. 
The ME estimation approach introduced in our previous work \cite{Ning2018} has introduced an alternative frequency-domain formulation that is not only expressed in terms of PSD functions of time series but also provides the same mean value as the original fGCM given in \eqref{eq:fGCM_mean}. More details of the ME-based fGCM are provided in the next section.

\subsection{On transfer entropy and directed information}
Transfer entropy (TE) was introduced in \cite{Schreiber2000} to measure directed dependence between discrete-valued time series and was extended in \cite{Kaiser2002} for continuous process. The TE from $\by$ to $\bx$ is defined as
\begin{align}\label{eq:TE}
\cT_{\by\rightarrow \bx} = h(\bx | \X_{-}) - h(\bx | \X_-, \Y_-),
\end{align}
where $\bx|(\X_-, \Y_-)=\hat \beps_x$ given by \eqref{eq:VARy12}.
Based on \eqref{eq:h}, \eqref{eq:Omega12} and \eqref{eq:GCMstd}, it can be shown that
\begin{align}\label{eq:FequalT}
\F_{\by\rightarrow \bx}= 2 \cT_{\by\rightarrow \bx}, 
\end{align}
which draws the relationship between TE and GCM for Gaussian processes. The above equation is extended in \cite{Barnett2009} for conditional causality where it was shown that
\[
\F_{\by\rightarrow \bx \mid \bz}= 2 \cT_{\by\rightarrow \bx\mid \bz} . 
\]
The TE measure is closely related to the directed information (DI) for causal inference  \cite{Massey1990,Quinn2013,Amblard2011a,Amblard2011}. Consider two segments of the time series $\X^n=(\bx_1; \ldots;\bx_n)$ and $\Y^n=(\by_1; \ldots;\by_n)$. Then the DI is defined as
\begin{align}
I(\Y^n \rightarrow \X^n) &:= \sum_{i=1}^n I(\Y^i; \bx_i | \X^{i-1}),\\
&= h(\X^n) - \sum_{i=1}^n h(\bx_i \mid \X^{i-1}, \Y^i),
\end{align}
where $I(\cdot; \cdot | \cdot)$ denotes the conditional mutual information.
Then the limit of the rate of DI is given by
 \begin{align}
I_{\infty}(\by\rightarrow \bx) &:= \lim_{n\rightarrow \infty} \frac1n I(\Y^n \rightarrow \X^n), \\
&=\lim_{n\rightarrow \infty} \frac1n( h(\X^n) - \sum_{i=1}^n h(\bx_i \mid \X^{i-1}, \Y^i)),\\
&= h(\bx) - h(\bx| \X_-,\Y),\\
&=h(\bx|\X_-) - h(\bx| \X_-,\Y)\label{eq:DI}
\end{align}
where the last equation is obtained based on \eqref{eq:entropy_equal}.
The difference between \eqref{eq:DI} and \eqref{eq:TE} lies in that $\Y$ contains the instantaneous measurement that is not included in $\Y_-$. Thus the rate of DI can be decomposed as TE and another term that quantifies the instantaneous coupling as shown in \cite{Amblard2011a,Amblard2011}. A generalization of DI for conditional causality analysis is discussed in \cite{Amblard2011a,Amblard2011}.

It is noted that \eqref{eq:DI} also coincides with the GCM with instantaneous feedback introduced in \cite{Geweke1982}. Based on the results in \cite{Geweke1982} and \eqref{eq:VARy12}, it can be shown that the covariance matrix of $\bx|(\X_-,\Y)$ is equal to
\[
\hat\Omega_{x|y}=\hat \Omega_{xx}-\hat \Omega_{xy}\hat \Omega_{yy}^{-1}\hat \Omega_{yx},
\]
and
\[
I_{\infty}(\by\rightarrow \bx) = \frac12 \ln \frac{\det \Sigma_{xx}}{\det(\hat\Omega_{x|y})}.
\]
In the following section, causal inference is only derived based on the strict past values as considered in the definitions of TE, GCM, and cGCM.  

}

\section{Minimum-entropy (ME) causal inference}\label{sec:ME_cGCM}

\subsection{ME-based GCM and fGCM}
The ME estimator of $\bx$ given the past values of $\by$ is defined as the filtered process $F(L) \by$ such that the entropy of the residual process 
\begin{align}\label{eq:y21t}
    \bx||\by:= \bx - F(L) \by
\end{align}
is minimized \cite{Ning2018}.
The problem in Eq. \eqref{eq:y21t} is closely related to the classical MMSE estimation method. 
Consider two zero-mean Gaussian random variables $X$ and $Y$. The MMSE estimator of $X$ given $Y$ is equal to the conditional expectation $\cE(X|Y=y)=F y$ where $F=\cE(XY^T)\cE(YY^T)^{-1} y$. In this case, the predictor $Fy$ not only minimizes the mean square error by also the entropy of the prediction error. 
For time-series data, the MMSE estimator has been extended to the Wiener filter and the Kalman filter \cite{Wiener1956,Kalman1960} to predict stochastic processes using dynamic filters where the constant predictor $F$ is generalized to a dynamic filter.

Though the problem of ME estimation is closely related to MMSE estimation, the optimal solution to Eq. \eqref{eq:y21t} is different from the Wiener or the Kalman filter. The difference lies in the fact that the mean-squared value is equal to the arithmetic mean of the trace of the PSD function of the prediction error process whereas the entropy is related to the geometric mean of the determinant of the PSD function as in Eq. \eqref{eq:OmegaPSD}. 

For the VAR model in \eqref{eq:VARy12}, if $\hat G_{xx}(L)$ has a stable inversion, then 
the ME residual is given by
\begin{align}\label{eq:y12}
   \bx||\by = \hat G_{xx}(L)^{-1} \hat \beps_{x}, 
\end{align}
with the ME filter given by \cite{Ning2018} 
\begin{align}\label{eq:ME_filter}
F(L)=-\hat G_{xx}(L)^{-1}\hat G_{xy}(L).
\end{align}
It was pointed out in \cite{Ning2018} that the entropy rate of the residual process is equal to
\[
 h(\bx||\by) =  h(\hat \beps_{x})= h(\bx| \X_-,\Y_-).
\]
Therefore, the GCM satisfies that
\begin{align}\label{eq:GCM}
    \F_{\by \rightarrow \bx} = 2\left( h(\bx) - h(\bx||\by) \right),
\end{align}
which quantifies the extent to which the past values of $\by$ can reduce the entropy rate of $\bx$.

Although \eqref{eq:GCM} is similar to the TE-based formulation in \eqref{eq:FequalT}, a major difference between the two methods is that the ME method has introduced a new time series $\bx||\by$ which can be calculated based on the measured data. Based on $\bx||\by$, a new fGCM method was introduced in \cite{Ning2018} as below
\begin{align}\label{eq:fGCM}
    \f_{\by \rightarrow \bx}^{\rm Ent}(\theta) = \ln \frac{\det S_{\bx}(\theta)}{\det S_{\bx||\by}(\theta)}.
\end{align}
Similar to the original fGCM $\f_{\by\rightarrow \bx}^{\rm Geweke}(\theta)$, $\f_{\by \rightarrow \bx}^{\rm Ent}(\theta)$ also satisfies that
\begin{align}
    \frac{1}{2\pi} \int_{-\pi}^{\pi} \f_{\by \rightarrow \bx}^{\rm Ent}(\theta)d \theta=\F_{\by \rightarrow \bx}.
\end{align}
It is noted again that the definition of $\f_{\by\rightarrow \bx}^{\rm Geweke}(\theta)$ does not explicitly provide a new time series.

For multivariate time series, the ME-based causality from $\by$ to $\bx$ conditional on the past value of $\bz$ can have three different formulations as illustrated in Fig. \ref{fig:algorithm}. Fig. (1a) illustrates the method of ME estimation and the corresponding GCM and fGCM measures. Figs. (1c) to (1e) illustrate three methods to remove the impact of $\by$ and/or $\bz$ on $\bx$. These procedures are equivalent for MMSE estimation problems but they provide different solutions for ME-based conditional causal inference which are derived in the following subsections.

\begin{figure*}
\centering
\includegraphics[width=.65\linewidth]{./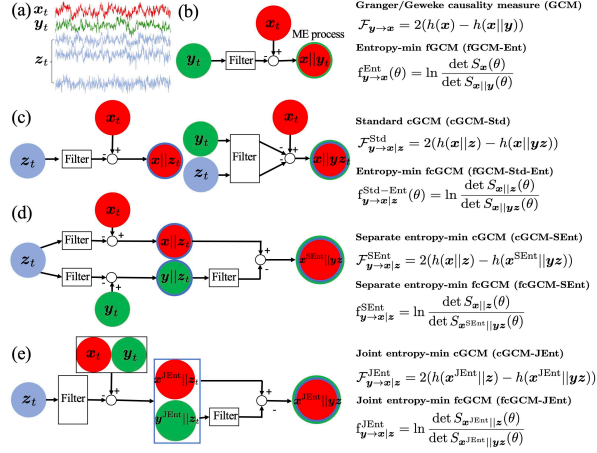}
\caption{Illustration of the computation algorithms for ME-based causality and conditional causality measures. (a) illustrates three sets of time series. (b) illustrates the method for ME-based GCM from $\by$ to $\bx$. (c) illustrates the ME-based formulation of the standard cGCM and fcGCM. (d) demonstrates the algorithm for separate entropy minimization based cGCM and fcGCM. (e) shows the algorithm for joint entropy minimization based cGCM and fcGCM.}
\label{fig:algorithm}
\end{figure*}

\subsection{ME-based formulation of standard cGCM}
The standard cGCM introduced in \cite{Geweke1984} can be formulated based on the ME estimators using the algorithm illustrated in Fig. \ref{fig:algorithm}(c).
To derive this formulation, 
let $\bx||\bz$ and $\bx||\by\bz$ denote the ME prediction error of $\bx$ by using the past of $\bz$ and the joint process of $(\by; \bz)$, respectively. 
Then, the standard cGCM is equal to
\begin{align}\label{eq:cGCMstd}
    \F^{\rm Std}_{\by\rightarrow \bx \mid \bz} = 2(h(\bx||\bz) - h(\bx||\by\bz)),
\end{align}
where $\bx||\by\bz$ represents the ME residual obtained by regressing out the past values of the joint process $(\by; \bz)$ in $\bx$ whose solution is given in Eq. \eqref{eq:by_123t_G}.
Based on Eq. \eqref{eq:fGCM}, the following function 
\begin{align}\label{eq:fcGCMstd}
    \f^{\rm Std-Ent}_{\by\rightarrow \bx \mid \bz}(\theta) =\ln \frac{\det S_{\bx|| \bz}(\theta)}{\det S_{\bx|| \by\bz}(\theta)}
\end{align}
provides a frequency-domain measure of conditional causality between the time series.
The solution for $\bx||\bz$ can be derived following Eq. \eqref{eq:y12} based on the joint model of $(\bx; \bz)$.
Similar to the standard fcGCM, $\f^{\rm Std-Ent}_{\by\rightarrow \bx \mid \bz}(\theta)$ also satisfies that
\begin{align}
    \frac{1}{2\pi} \int_{-\pi}^{\pi} \f^{\rm Std-Ent}_{\by\rightarrow \bx \mid \bz}(\theta)  d\theta = \F^{\rm Std}_{\by\rightarrow \bx \mid \bz}.
\end{align}
It is note that the model from the joint process $(\bx; \bz)$ should be consistent with the joint model for $(\bx; \by; \bz)$ and any other subprocess. Such a solution can be obtained based on the state-space representation and spectral factorization algorithms which are explained in the Appendix.

Though Eq. \eqref{eq:cGCMstd} and Eq. \eqref{eq:TE} have similar expressions, the underlying rationales between the two methods are different. First, Eq. \eqref{eq:cGCMstd} quantifies the difference between the entropy rate of two time-series $\bx||\by$ and $\bx||\by\bz$ which are not necessarily white Gaussian processes. On the other hand, Eq. \eqref{eq:TE} quantifies the difference between conditional entropy of two Gaussian random variables $\bx| (\X_{-},\Z_{-})$ and $\bx|(\X_{-}, \Y_{-},\Z_{-})$ which can be considered as elements of two white innovation processes.  
Second, the ME-based spectral formulation in Eq. \eqref{eq:fcGCMstd} compares the PSD functions of two time-series introduced in Eq. \eqref{eq:cGCMstd}. But the original spectral formulations of GCM \cite{Geweke1982} or cGCM in \cite{Geweke1984} are developed based on a heuristic matrix-decomposition method to satisfy Geweke's requirements that the spectral measures are nonnegative and are equal to the time-domain measure on average \cite{Geweke1982,Geweke1984}, see Eq. \eqref{eq:Geweke_fGCM} and Eq. \eqref{eq:Geweke_fcGCM} for more details. 
But the decomposed PSD functions are not related to any time series. For this reason, it was argued in \cite{Chicharro2011} that the TE formulation lacks a spectral representation in terms of processes based on measured data. On the other hand, the ME-based frequency-domain measures in Eq. \eqref{eq:fGCM} and Eq. \eqref{eq:fcGCMstd} may take negative values at certain frequencies, though their mean values are non-negative. 
Negative frequency-domain measures can occur when the ME filter enhances the power at certain frequencies and reduce the power at other potentially more important frequencies to ensure the overall entropy is minimized.

\subsection{Separate ME estimation-based cGCM}
{For three jointly Gaussian random variables $X, Y$, and $Z$, the conditional expectation of $X$ given $Y$ and $Z$ have the following equivalent formulations 
\begin{align}\label{eq:CondExpec}
\cE_x(X \mid Y=y, Z=z) =  \cE_x(  (X| Z=z) \mid (Y|Z=z)=y).
\end{align}
It indicates that the MMSE estimation of $X$ using the joint random variable $(Y; Z)$ is the same as the result from a two-step procedure that takes the expectation on $Z$ and $Y$ separately.}
The ME-based estimator for Gaussian processes can be considered as a generalization of the conditional expectation or the MMSE estimator. Thus, an alternative formulation for cGCM can be derived analogous to the two-step method in Eq. \eqref{eq:CondExpec}, as illustrated in Fig.  \ref{fig:algorithm}(d), which is introduced below.

First, two causal filters are separately applied to obtain the ME estimation of $\bx$ and $\by$ based on the past measurements of $\bz$ with the prediction error processes denoted by $\bx||\bz$ and $\by||\bz$, respectively.
Next, another filter is applied to $\by||\bz$ to predict $\bx||\bz$ with the ME residual process denoted by 
\begin{align}\label{eq:hatxyz}
 \bx^{\rm SEnt}||\by\bz : =(\bx|| \bz) || (\by||\bz).
\end{align}
Then, the corresponding cGCM is defined by
\begin{align}
    \F^{\SEnt}_{\by\rightarrow \bx | \bz} &:=  \F_{\by||\bz \rightarrow \bx|| \bz}\\
    &=2(h(\bx|| \bz)-h(\bx^{\rm SEnt}||\by\bz)),\label{eq:cGCMSEnt}
\end{align}
where $\SEnt$ represents ``separate entropy" minimization.
The corresponding fcGCM is given by
\begin{align}
    \f^{\SEnt}_{\by\rightarrow \bx | \bz}(\theta) &:= \f_{\by||\bz \rightarrow \bx|| \bz}^{\rm Ent}(\theta)\\
    &=\ln \frac{\det S_{\bx||\bz}(\theta)}{\det S_{\bx^{\rm SEnt} ||\by\bz}(\theta)}.
\end{align}
More details about the computational algorithms for $F^{\SEnt}_{\by\rightarrow \bx | \bz}$ and $\f^{\SEnt}_{\by\rightarrow \bx | \bz}(\theta)$ based on state-space representations are provided in Appendix.

\subsection{cGCM based on joint ME estimation}

{Consider three random variables $X, Y, Z$, below are two equivalent formulations for conditional expectations
\begin{align}
\left(\begin{matrix} \cE(X | Z=z );\;  \cE(Y | Z=z )\end{matrix} \right) = \cE\left( \left(\begin{matrix}X ;\; Y \end{matrix}\right) \mid Z=z\right),
\end{align}
which trivially implies that the MMSE estimation of joint random variable $(X; Y)$ given $Z$ is equal to the joint of the MMSE estimation of each random variable.
But the entropy of joint stochastic processes is not equal to the sum of the entropy of individual processes because of the coupling between the processes.
By exploring the difference between the two formulations, the third type of cGCM can be derived following the approach illustrated in Fig. \ref{fig:algorithm}(e). }

First, a filter is applied to $\bz$ to predict the joint process $\left( \bx; \by\right)$ to minimize the entropy of the residual process denoted by 
$\left( \bx^{\rm JEnt}||\bz;  \by^{\rm JEnt} ||\bz \right)$, where $\JEnt$ represents ``joint entropy" minimization.
{The main difference between $\bx^{\rm JEnt}||\bz$ and $\bx||\bz$ is that $\bx||\bz$ does not consider $\by$ when regressing out the past values of $\bz$ from $\bx$.}
Next, a filter is applied to $\by^{\rm JEnt}||\bz$ to predict the ME estimation of $\bx^{\rm JEnt}||\bz$ with the residual process being denoted by 
\begin{align}\label{eq:tildexyz}
 \bx^{\rm JEnt}||\by\bz := (\bx^{\rm JEnt} ||\bz)|| (\by^{\rm JEnt} ||\bz).
\end{align}
Then, the third type of cGCM is defined by 
\begin{align}
    \F^{\JEnt}_{\by\rightarrow \bx \mid \bz} &:= \F_{\by^{\rm JEnt} ||\bz  \rightarrow \bx^{\rm JEnt} ||\bz }\\
    &=2(h(\bx^{\rm JEnt}||\bz)-h(\bx^{\rm JEnt}||\by\bz)). \label{eq:FJEnt}
\end{align} 
The corresponding frequency-domain cGCM is defined by
\begin{align}
    \f^{\JEnt}_{\by\rightarrow \bx \mid \bz}(\theta) &:= \f_{\by^{\rm JEnt} ||\bz  \rightarrow \bx^{\rm JEnt} ||\bz }(\theta)\\
    &=\ln \frac{\det S_{\bx^{\rm JEnt} ||\bz}(\theta)}{\det S_{\bx^{\rm JEnt} ||\by\bz}(\theta)}),\label{eq:cGCMJEnt}
\end{align}
whose mean value is equal to $\F^{\JEnt}_{\by\rightarrow \bx \mid \bz}$. 

\subsection{On the relationship between ME-based cGCM methods}

Though the three ME-based cGCM methods are motivated based on equivalent formulations of MMSE estimation, their values are different and satisfy the following Proposition.
\begin{prop}\label{prop1}
For a joint zero-mean wide-sense stationary Gaussian processes $(\bx; \by; \bz)$, the $ \F^{\Std}_{\by \rightarrow \bx \mid \bz}, \F^{\SEnt}_{\by \rightarrow \bx \mid \bz},\F^{\JEnt}_{\by \rightarrow \bx \mid \bz}$ measures defined in \eqref{eq:cGCMstd}, \eqref{eq:cGCMSEnt}, and \eqref{eq:FJEnt}, respectively, satisfy that
\begin{align}\label{eq:ineq}
\F^{\JEnt}_{\by \rightarrow \bx \mid \bz} \geq  \F^{\Std}_{\by \rightarrow \bx \mid \bz}\geq \F^{\SEnt}_{\by \rightarrow \bx \mid \bz}\geq 0.
\end{align}
\end{prop}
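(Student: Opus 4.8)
The plan is to reduce each of the three measures to a difference of one‑step conditional (differential) entropies of $\bx$ and then compare them pairwise. First I would use the ME‑residual representation \eqref{eq:y12} together with the entropy‑rate identity \eqref{eq:entropy_equal} to record that $h(\bx||\bz)=h(\bx\mid\X_-,\Z_-)$ and $h(\bx||\by\bz)=h(\bx\mid\X_-,\Y_-,\Z_-)$. Abbreviating $a:=h(\bx||\bz)$ and $b:=h(\bx||\by\bz)$, the definitions \eqref{eq:cGCMstd}, \eqref{eq:cGCMSEnt} and \eqref{eq:FJEnt} read
\begin{align*}
\F^{\Std}_{\by\rightarrow\bx\mid\bz}&=2(a-b),\\
\F^{\SEnt}_{\by\rightarrow\bx\mid\bz}&=2\big(a-h(\bx^{\rm SEnt}||\by\bz)\big),\\
\F^{\JEnt}_{\by\rightarrow\bx\mid\bz}&=2\big(h(\bx^{\JEnt}||\bz)-h(\bx^{\JEnt}||\by\bz)\big),
\end{align*}
so $\F^{\Std}$ and $\F^{\SEnt}$ share the leading term $a$, and all three inequalities in \eqref{eq:ineq} become comparisons among $a$, $b$, $h(\bx^{\rm SEnt}||\by\bz)$ and the two JEnt entropies.

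The step I expect to be the main obstacle is showing that the two fully conditioned residuals actually coincide, $\bx^{\JEnt}||\by\bz=\bx||\by\bz$, so that the second terms of $\F^{\JEnt}$ and $\F^{\Std}$ are equal. I would argue this from the joint VAR \eqref{eq:G_abc}: grouping the first two components and writing $G_{ww}$ for the $(\bx,\by)$ diagonal block and $G_{wz}$ for its $\bz$‑column, the block analogue of \eqref{eq:y12} gives the joint residual $(\bx^{\JEnt}||\bz;\by^{\JEnt}||\bz)=G_{ww}(L)^{-1}(\beps_x;\beps_y)$. Since $G_{ww}$ is monic and (by the standing minimum‑phase assumption behind \eqref{eq:y12}) stably invertible, $(\beps_x;\beps_y)$ is the innovation of this joint process, so regressing $\by^{\JEnt}||\bz$ out of $\bx^{\JEnt}||\bz$ via \eqref{eq:y12} returns $G_{xx}(L)^{-1}\beps_x$, where $G_{xx}$ and $\beps_x$ are exactly the $\bx$‑row block and $\bx$‑innovation of \eqref{eq:G_abc}. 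On the other hand, regressing $(\by;\bz)$ jointly out of $\bx$ in \eqref{eq:G_abc} produces the standard residual $\bx||\by\bz=G_{xx}(L)^{-1}\beps_x$ as well, which proves the identity and gives $h(\bx^{\JEnt}||\by\bz)=h(\bx\mid\X_-,\Y_-,\Z_-)=b$. With the second terms equal, $\F^{\JEnt}\ge\F^{\Std}$ reduces to $h(\bx^{\JEnt}||\bz)\ge a$, which is immediate from optimality: $\bx^{\JEnt}||\bz$ equals $\bx$ minus one particular causal filter of $\bz$ (the $\bx$‑row of $-G_{ww}(L)^{-1}G_{wz}(L)$), whereas $a=h(\bx||\bz)$ is by definition the minimum of $h(\bx-F(L)\bz)$ over all causal $F$.

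For $\F^{\Std}\ge\F^{\SEnt}$, the shared term $a$ reduces the claim to $h(\bx^{\rm SEnt}||\by\bz)\ge b$. Writing $\tilde\bx:=\bx||\bz$ and $\tilde\by:=\by||\bz$, the separate residual has entropy $h(\tilde\bx\mid\tilde\X_-,\tilde\Y_-)$. Each lagged value of $\tilde\bx$ and $\tilde\by$ is a causal function of $\{\X_-,\Z_-\}$ and $\{\Y_-,\Z_-\}$ respectively, so the information in the past of $(\tilde\bx,\tilde\by)$ is contained in that of $(\X_-,\Y_-,\Z_-)$; conditioning on the richer past only decreases entropy, and since $\tilde\bx_t$ differs from $\bx_t$ by a term measurable with respect to $\Z_-$, translation invariance of differential entropy yields $h(\tilde\bx\mid\tilde\X_-,\tilde\Y_-)\ge h(\tilde\bx\mid\X_-,\Y_-,\Z_-)=h(\bx\mid\X_-,\Y_-,\Z_-)=b$. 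Finally $\F^{\SEnt}\ge0$ because it is the genuine GCM $\F_{\tilde\by\rightarrow\tilde\bx}$, which is nonnegative by the prediction‑error inequality noted before \eqref{eq:GCMstd} (equivalently $a=h(\tilde\bx\mid\tilde\X_-)\ge h(\tilde\bx\mid\tilde\X_-,\tilde\Y_-)$). Chaining $\F^{\JEnt}\ge\F^{\Std}\ge\F^{\SEnt}\ge0$ then gives \eqref{eq:ineq}.
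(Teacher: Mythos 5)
Your proposal is correct and takes essentially the same route as the paper's own proof: the decisive identity $\bx^{\JEnt}||\by\bz=\bx||\by\bz=G_{xx}(L)^{-1}\beps_x$ (the paper's Eq.~\eqref{eq:ineq1a}, whose derivation it defers to Appendix Eqs.~\eqref{eq:by_123t_G} and \eqref{eq:JEnt123}, and which you correctly derive inline from the block VAR), followed by the ME-optimality inequality $h(\bx^{\JEnt}||\bz)\ge h(\bx||\bz)$, the inequality $h(\bx^{\rm SEnt}||\by\bz)\ge h(\bx||\by\bz)$, and nonnegativity of $\F^{\SEnt}_{\by\rightarrow\bx\mid\bz}$ as a genuine GCM. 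Your only departure is cosmetic: you justify the SEnt step via conditioning on the richer past $(\X_-,\Y_-,\Z_-)$ plus translation invariance of differential entropy, whereas the paper invokes optimality of the joint ME filter over the class $\bx-F(L)(\by;\bz)$ directly --- the same point in slightly different dress.
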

\begin{proof}
Based on the definitions of $\bx^{\rm JEnt} ||\by\bz$ and $\bx||\by\bz$, it can be shown that
\begin{align}\label{eq:ineq1a}
 \bx^{\rm JEnt} ||\by\bz = \bx||\by\bz = G_{xx}^{-1}(L) \beps_x,
\end{align}
where more details can be found in Eq. \eqref{eq:by_123t_G} and Eq. \eqref{eq:JEnt123} in Appendix.
Since $\bx||\bz$ is the ME estimator for $\bx$ using the past values of $\bz$ {and $\bx^{\rm JEnt} || \bz$ involves entropy minimization of the joint process $(\bx; \by)$ based on a joint model of $(\bx; \by;\bz)$}, 
it holds that
\begin{align}\label{eq:ineq1b}
h( \bx^{\rm JEnt} || \bz) \geq h(\bx || \bz).
\end{align}
Combining Eq. \eqref{eq:ineq1a} and Eq. \eqref{eq:ineq1b} leads to 
\begin{align}\label{eq:ineq1}
    \F^{\JEnt}_{\by \rightarrow \bx \mid \bz}&= 2(h(\bx^{\rm JEnt}  || \bz)-h(\bx^{\rm JEnt}  || \by\bz ))\nonumber \\
    &\geq 2(h(\bx || \bz)-h( \bx^{\rm JEnt}  || \by\bz )) \nonumber \\
    &=\F^{\Std}_{\by \rightarrow \bx \mid \bz}.
\end{align}
Furthermore, since $\bx||\by\bz$ is the ME estimator for $\bx$ using the joint process $\by$ and $\bz$, it holds that
\begin{align}
  h(\hat \bx || \by\bz) \geq h(\bx|| \by\bz).  
\end{align}
Therefore, the following inequality holds
\begin{align}\label{eq:ineq2}
    \F^{\Std}_{\by \rightarrow \bx \mid \bz} &= 2(h(\bx|| \bz)-h(\bx||\by\bz))\nonumber\\
    &\geq 2(h(\bx||\bz)-h(\hat\bx||\by\bz))\nonumber\\
   &= \F^{\SEnt}_{\by \rightarrow \bx \mid \bz}.
\end{align}
Moreover, by the definition in \eqref{eq:cGCMSEnt}, $\F^{\SEnt}_{\by \rightarrow \bx \mid \bz}\geq0$ which completes the proof.
\end{proof}

The differences between $\F^{\JEnt}_{\by \rightarrow \bx \mid \bz}, \F^{\Std}_{\by \rightarrow \bx \mid \bz}, \F^{\SEnt}_{\by \rightarrow \bx \mid \bz}$ reflect the distinctive features of ME estimation compared to orthogonal projection based MMSE estimation.

\begin{prop}\label{prop2}
Consider the VAR model for joint process $(\bx;
\by; \bz)$ in \eqref{eq:G_abc}, then $\F^{\JEnt}_{\by \rightarrow \bx \mid \bz}= \F^{\Std}_{\by \rightarrow \bx \mid \bz}= \F^{\SEnt}_{\by \rightarrow \bx \mid \bz}=0$ if and only if $G_{xy}(L)=0$.
\end{prop}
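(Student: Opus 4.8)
The plan is to collapse the three-way equality to a single equivalence using Proposition~\ref{prop1}, and then to prove that equivalence from the explicit VAR forms of the ME residuals. By the chain \eqref{eq:ineq} the three measures are sandwiched as $\F^{\JEnt}_{\by\rightarrow\bx\mid\bz}\geq\F^{\Std}_{\by\rightarrow\bx\mid\bz}\geq\F^{\SEnt}_{\by\rightarrow\bx\mid\bz}\geq0$, so all three vanish if and only if the largest one does: $\F^{\JEnt}=0$ forces $0\geq\F^{\Std}\geq\F^{\SEnt}\geq0$, while all-zero trivially gives $\F^{\JEnt}=0$. Hence it suffices to show $\F^{\JEnt}_{\by\rightarrow\bx\mid\bz}=0\iff G_{xy}(L)=0$. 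At the outset I would record the two structural identities the argument rests on: $\bx^{\JEnt}||\by\bz=G_{xx}(L)^{-1}\beps_x$ from \eqref{eq:ineq1a}, and the joint-ME identity $(\bx^{\JEnt}||\bz;\by^{\JEnt}||\bz)=G_{[xy]}(L)^{-1}(\beps_x;\beps_y)$ from \eqref{eq:JEnt123}, where $G_{[xy]}(L)$ denotes the $(\bx;\by)$ principal block of the full VAR matrix in \eqref{eq:G_abc} and $(\beps_x;\beps_y)$ is the corresponding temporally white innovation sub-vector.

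For the \textbf{if} direction, suppose $G_{xy}(L)=0$. Then $G_{[xy]}(L)$ is block lower triangular, so its inverse has $(1,1)$ block $G_{xx}(L)^{-1}$ and vanishing $(1,2)$ block; reading off the first component of the joint-ME identity gives $\bx^{\JEnt}||\bz=G_{xx}(L)^{-1}\beps_x$. This is exactly $\bx^{\JEnt}||\by\bz$ by \eqref{eq:ineq1a}, so the two residual processes coincide, their entropy rates agree, and $\F^{\JEnt}_{\by\rightarrow\bx\mid\bz}=2\bigl(h(\bx^{\JEnt}||\bz)-h(\bx^{\JEnt}||\by\bz)\bigr)=0$. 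The sandwich then pins $\F^{\Std}$ and $\F^{\SEnt}$ to zero as well.

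For the \textbf{only if} direction, I would read $\F^{\JEnt}_{\by\rightarrow\bx\mid\bz}=\F_{\by'\rightarrow\bx'}$ as an ordinary GCM between the processes $\bx':=\bx^{\JEnt}||\bz$ and $\by':=\by^{\JEnt}||\bz$, whose joint innovations representation is precisely $G_{[xy]}(L)(\bx';\by')=(\beps_x;\beps_y)$ (using that $G_{[xy]}(L)$ is minimum phase, the standing assumption under which the ME residuals are well defined). Writing entropy rates as log-determinants of one-step prediction-error covariances via \eqref{eq:h}, \eqref{eq:entropy_equal} and \eqref{eq:OmegaPSD}, the identity $\F^{\JEnt}=0$ becomes the statement that the prediction-error covariance of $\bx'$ from its own past equals that from the joint past of $(\bx';\by')$. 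Since the former dominates the latter in the positive-definite order, equal determinants force equal matrices, i.e. the optimal predictor of $\bx'$ from the joint past does not use the past of $\by'$. It then remains to convert this ``no conditional causality'' statement into the coefficient condition $G_{xy}(L)=0$.

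This last conversion is the step I expect to be the main obstacle. The naive argument—equate the predictor with the VAR expression $-\sum_{k\geq1}(G_{xx,k}\bx'_{t-k}+G_{xy,k}\by'_{t-k})$ and conclude it lies in the closed span of the past of $\bx'$—only yields $\sum_{k\geq1}G_{xy,k}\by'_{t-k}\in\overline{\operatorname{span}}\{\bx'_{t-j}:j\geq1\}$, which does \emph{not} immediately force the coefficients to vanish, because the past of $\by'$ is itself partly predictable from the past of $\bx'$. I would close the gap by passing to the white innovation: decompose $\beps_{y}$ into its component along $\beps_{x}$ at the same instant plus an orthogonal residual $\beps_{y\mid x}$ with positive-definite covariance. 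Temporal whiteness of $(\beps_x;\beps_y)$ makes $\{\beps_{y\mid x,t-k}\}_{k\geq1}$ mutually orthogonal across lags, so the vanishing of the component of the predictor orthogonal to the past innovations of $\bx'$ reads as a sum of orthogonal terms equal to zero; taking covariances annihilates all cross terms and forces every transfer-function coefficient to vanish, hence the $(\bx;\by)$ off-diagonal block of $G_{[xy]}(L)^{-1}$ is zero and, by inverting the block-triangular transfer function, $G_{xy}(L)=0$. As a cross-check, the same conclusion follows from the classical fact that Geweke's standard conditional measure $\F^{\Std}_{\by\rightarrow\bx\mid\bz}$ vanishes iff the block $G_{xy}(L)$ of the full VAR vanishes \cite{Geweke1984,Barnett2015}, applied here to $\F^{\Std}=0$.
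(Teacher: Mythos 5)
Your proof is correct, and its ``only if'' half takes a genuinely different route from the paper's. The ``if'' direction is identical in both: $G_{xy}(L)=0$ makes the $(\bx;\by)$ block of \eqref{eq:G_abc} triangular, so $\bx^{\rm JEnt}||\bz=G_{xx}(L)^{-1}\beps_x=\bx^{\rm JEnt}||\by\bz$, and Proposition~\ref{prop1} zeroes the other two measures. For the converse, the paper attacks the \emph{smallest} measure: it proves the contrapositive $G_{xy}(L)\neq0\Rightarrow\F^{\SEnt}_{\by\rightarrow\bx\mid\bz}>0$, invoking Proposition~\ref{prop3} to rewrite $\F^{\SEnt}$ as $\F_{\check\beps_y\rightarrow\tilde\beps_x}$ for the pairwise-model innovations of \eqref{eq:hateps} and \eqref{eq:Gy23}, and then running a Hilbert-space chain: $\tilde\beps_x$ orthogonal to the past of $\check\beps_y$ and of $\bz$, hence to the past of $\by$ via invertibility of $\check G_{yy}(L)$, hence to the entire past, forcing $\tilde\beps_x=\beps_x$ and so $G_{xy}(L)=0$. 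You instead attack the \emph{largest} measure, reading $\F^{\JEnt}$ as a bivariate GCM for the pair in \eqref{eq:Gtildey} and proving directly that bivariate non-causality forces the off-diagonal block $H_{xy}(L)$ of $G_{[xy]}(L)^{-1}$ to vanish, hence $G_{xy}(L)=0$; the gap you flag in the naive span argument is real, and your orthogonalization fix is the correct mechanism — the component of the joint-past predictor orthogonal to the span of past $\beps_x$'s is $\sum_{j\geq1}H_{xy,j}\beps_{y|x,t-j}$, a sum of cross-lag-orthogonal terms, so positive definiteness of the covariance of $\beps_{y|x}$ kills each coefficient. One step you should make explicit: under the non-causality hypothesis the own-past prediction error of $\bx'$ \emph{is} $\beps_x$ itself, so by the Wold decomposition the own past of $\bx'$ is spanned by past $\beps_x$'s — that is precisely what licenses projecting the predictor onto the $\beps_{y|x}$'s. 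The trade-off between the two routes: yours is self-contained within the $(\bx;\by)$ principal block of the full VAR and needs neither Proposition~\ref{prop3} nor the pairwise spectral factorizations; the paper's yields the strictly stronger per-measure fact that \emph{each} of $\F^{\JEnt},\F^{\Std},\F^{\SEnt}$ individually vanishes iff $G_{xy}(L)=0$ (in particular $\F^{\SEnt}=0\Rightarrow G_{xy}(L)=0$), which your argument does not deliver — though the conjunction, which is all the proposition asserts, does follow from yours. Finally, your closing ``cross-check'' via Geweke's classical result is essentially the statement under proof specialized to $\F^{\Std}$, so it should be treated as a citation rather than an independent argument.
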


\begin{proof}

Based on the definition, it is straightforward to derive that if $G_{xy}(L) =0$ then 
\[
\bx^{\rm JEnt}||\bz = G_{xx}^{-1}(L) \beps_x,
\]
which is equal to $\bx^{\rm JEnt}||\by\bz$, see Eq. \eqref{eq:Gtildey} and Eq. \eqref{eq:JEnt123} in Appendix for more details about the definitions.
Then, $\F^{\JEnt}_{\by \rightarrow \bx \mid \bz}=0$. 
Thus, Proposition \ref{prop1} indicates that $\F^{\JEnt}_{\by \rightarrow \bx \mid \bz}= \F^{\Std}_{\by \rightarrow \bx \mid \bz}= \F^{\SEnt}_{\by \rightarrow \bx \mid \bz}=0$. It remains to show that if $G_{xy}(L)\neq0$, then 
\begin{align}\label{eq:prop2eq2}
 \F^{\SEnt}_{\by \rightarrow \bx \mid \bz}>0.
\end{align}
For this purpose, consider the following representation for the joint processes $\left(\bx;  \bz \right)$ and 
$\left(\by; \bz \right)$
\begin{align}
\tilde G_{xx}(L) \bx+ \tilde G_{xz}(L)\bz &= \tilde \beps_{x},\label{eq:hateps}\\
\check G_{yy}(L) \by + \check G_{yz}(L) \bz&= \check \beps_{y}.
\end{align}
Note that $\bx||\bz= \tilde G_{xx}(L)^{-1} \tilde \beps_{x}$ and $\by||\bz= \check G_{yy}(L) ^{-1}\check \beps_{y}$. 
Proposition \ref{prop3} shows that $\F^{\SEnt}_{\by \rightarrow \bx \mid \bz}=\F_{\check \beps_y \rightarrow \tilde \beps_x}$.
Therefore, it is equivalent to prove that if $G_{xy}(L)\neq 0$ then $ \F_{\check \beps_y \rightarrow \tilde \beps_x}>0$.

Assume $ \F_{\check \beps_y \rightarrow \tilde \beps_x}=0$. Then it indicates that $\tilde \beps_{x}$ is also the innovation process for the joint modeling  of $(\tilde \beps_{x}; \check \beps_{y})$ since the past values of  $\check \beps_{y}$ do not improve the prediction of $\tilde \beps_{x}$. Therefore, $\tilde \beps_{x} \perp H\{\check \epsilon_{y,t-k}, k \in \mZ^+  \}$ where $H\{\check \beps_{y,t-k}, k \in \mZ^+  \}$ denotes the Hilbert space spanned by all the past variables of $\check \beps_{y}$.
By definition $\tilde \beps_{x}$ satisfies that $\tilde \beps_{x} \perp H\{ \bz_{t-k}, k \in \mZ^+  \}$ since $\tilde \beps_{x}$ is the innovation process in Eq. \eqref{eq:hateps}. Then, 
$\cE(F(L) \hat \beps_{y} \tilde \beps_{x}^T) =\cE(F(L) \hat G_{yy}(L) \by \tilde \beps_{x}^T) =0$
for any causal filter $F(L)$.
Since $\hat G_{yy}(L)$ is invertible by assumption, then $\tilde \beps_{x} \perp H\{ \by_{t-k}, k \in \mZ^+  \}$.
Thus, $\tilde \beps_{x}  \perp H\{ \bx_{t-k}, \by_{t-k}, \bz_{t-k}, k \in \mZ^+  \}$ which indicates that $\tilde \beps_{x}= \beps_{x}$. Thus, Eq. \eqref{eq:hateps} implies that the optimal $G_{xy}(L)=0$ which contradicts to the assumption that $G_{xy}(L)\neq 0$. Therefore, $ \F_{\check \beps_y \rightarrow \tilde \beps_x}>0$ which completes the proof.
\end{proof}

Proposition \ref{prop2} indicates that if the joint model for $(\bx; \by; \bz)$ is estimated correctly, then all three cGCM measures do not lead to false positive connections. In practice, model parameters are estimated based on noisy and finite measurements, leading to noisy causality measures. In this case, causal inference is usually achieved by comparing the estimated measures with the probability distribution of measures under the null hypothesis. 
In this case, the three cGCM measures can potentially provide different sensitivity and specificity to detect network connections in statistical testing which are examined in the following section.

\begin{remark} In \cite{Geweke1982}, Geweke introduced the measure of instantaneous linear feedback $\F_{\bx\cdot \by}$ and the measure of linear dependence $\F_{\bx, \by}$ which have the following ME-based representations 
\begin{align*}
\F_{\bx\cdot \by}&= 2(h(\bx||\by) + h(\by||\bx) -h((\bx;\by)))\\
&=\ln \frac{\det \hat \Omega_{xx} \det \hat\Omega_{yy}}{\det\hat \Omega},\\
\F_{\bx, \by}&= 2(h(\bx) + h(\by) -h((\bx;\by)))\\
&=\ln \frac{\det \Sigma_{xx} \det \Sigma_{yy}}{\det\hat \Omega},
\end{align*}
which satisfy that
$\F_{\bx,\by}= \F_{\by\rightarrow \bx}+ \F_{\bx\rightarrow \by}+ \F_{\bx\cdot\by}$.
In analogy to Eq. \eqref{eq:fGCM}, the ME-based frequency-domain $\F_{\bx\cdot \by}$ and $\F_{\bx\cdot \by}$ can be constructed as
\begin{align*}
\f_{\bx\cdot \by}(\theta)&=\ln \frac{\det S_{\bx||\by}(\theta)  \det S_{\by||\bx}(\theta) }{\det S_{(\bx;\by)}(\theta)},\\
\f_{\bx, \by}(\theta)&=\ln \frac{\det S_\bx(\theta) \det S_{\by}(\theta)}{\det S_{(\bx;\by)}(\theta)}.
\end{align*}
In \cite{Geweke1984}, $\F_{\bx\cdot \by}$ and $\F_{\bx, \by}$ are generalized to the conditional linear feedback and linear dependence measures whose ME-based formulations can be constructed as
\begin{align}
\F_{\bx\cdot \by|\bz}^{\rm Std}&= 2(h(\bx||\by\bz) + h(\by||\bx\bz) -h((\bx;\by)||\bz))\label{eq:Fxyz_A1}\\
&=\ln \frac{\det \Omega_{xx} \det \Omega_{yy}}{\det \left[\begin{matrix}\Omega_{xx} & \Omega_{xy}\\ \Omega_{yx} & \Omega_{yy} \end{matrix} \right]},\label{eq:Fxyz_A2}\\
\F_{\bx, \by|\bz}^{\rm Std}&= 2(h(\bx||\bz) + h(\by||\bz) -h((\bx;\by)||\bz))\label{eq:Fxyz_B1}\\
&=\ln \frac{\det \tilde \Omega_{xx} \det \check \Omega_{yy}}{\det  \left[\begin{matrix}\Omega_{xx} & \Omega_{xy}\\ \Omega_{yx} & \Omega_{yy} \end{matrix} \right]}.\label{eq:Fxyz_B2}
\end{align}
Then the following decomposition holds
\begin{align*}
\F_{\bx,\by|\bz}= \F_{\by\rightarrow \bx|\bz}^{\rm Std}+ \F_{\bx\rightarrow \by|\bz}^{\rm Std}+ \F_{\bx\cdot\by|\bz}.
\end{align*}
The ME-based frequency-domain generalization of $\F_{\bx\cdot \by|\bz}^{\rm Std}$ and $\F_{\bx, \by|\bz}^{\rm Std}$ can be constructed straightforwardly by replacing the covariance matrices in \eqref{eq:Fxyz_A2}
and \eqref{eq:Fxyz_B2} by the PSD functions of the corresponding ME processes. Moreover, the SEnt-based formulations can be constructed analogously by replacing $\bx||\by\bz, \by||\bx\bz$ in \eqref{eq:Fxyz_A1} with $\bx^{\rm SEnt}||\by\bz$ and $\by^{\rm SEnt}||\bx\bz$. Similarly, the JEnt-based formulations can be obtained by replacing $\bx||\bz, \by||\bz, \bx||\by\bz, \by||\bx\bz$ with $ \bx^{\rm JEnt} ||\bz, \by^{\rm JEnt} ||\bz, \bx^{\rm JEnt} ||\by\bz, \by^{\rm JEnt} ||\bx\bz$, respectively.
\end{remark}

\section{Experiments}\label{sec:experiment}

\subsection{Simulations}

The sensitivity and specificity of the proposed methods and the standard GCM, cGCM methods have been compared using simulation data based on three different network structures.
The simulated time series were generated based on three VAR models of the following form
\begin{align}
   \bu_{t+1} = A\bu_t + \beps_t,
\end{align}
where the covariance of the process $\beps$ had a compound symmetry structure given by
\begin{align}
\cE(\beps_{i}\beps_{j})= 
\begin{cases}
1 &\text{if $i\neq j$,}\\
2 &\text{if $i = j$,}
\end{cases}
\end{align}
to represent correlated innovation processes. Three sets of system matrices $A$ were simulated according to the graphs illustrated in Figs. \ref{fig:simulation} (a), (b), (c), respectively, where the entry $A_{ij}=a$ if there is a directed connection from node $j$ to node $i$, including the case when $i=j$.
The graph in Fig. 2(a) has the same topology as an example used in the MVGC Matlab toolbox \cite{Barnett2014}. 
Fig. \ref{fig:simulation}(b) has a circular structure. Fig. \ref{fig:simulation}(c) illustrates a star-shaped graph where all nodes on the circular boundary receive input from the center node.
 For each VAR model, two sets of experiments were simulated with $a$ being chosen such as the maximum magnitude of the eigenvalues of $A$ were equal to $\lambda_{\rm max}= 0.85, 0.6$, respectively, to analyze the effect of different noise levels.

In the experiments, 1000 independent simulations were generated based on the VAR models with the length of observation being 120 for Figs. \ref{fig:simulation}(a) and (b) and 60 for Fig. \ref{fig:simulation}(c).
The simulated data were used to estimate the VAR model parameters using least-square fitting methods.
The VAR models were further transformed to state-space representations that were applied to compute the GCM and cGCM measures using the algorithms provided in the Appendix.

{To generate the null distributions of the estimated measures for statistical hypothesis testing, the time indices of each time series were randomly permuted while keeping other time series unchanged.}
Then, the VAR model parameters were estimated using the permuted time series and the GCM and cGCM measures were computed based on the estimated parameters. 1000 random permutations were applied to generate the null distributions for each pair of GCM or cGCM measures. Thus, 36000 (i.e., $1000\times 9\times 8/2$) permutations were generated in each trial.
A connection was declared if the p-value of the non-permuted measures in the null distribution was lower than a significance level that is determined with or without using methods for multiple comparisons. 
1000 p-values were collected from all the trials for each causality which were used to estimate the corresponding false-positive rate (FPR) and true-positive rate (TPR).

\subsection{In vivo MRI analysis}

The performance of these ME-based cGCM algorithms and standard methods have been compared using \emph{in vivo} MRI data from 100 unrelated subjects of the Human Connectome Project \cite{Glasser2013,Smith2013}.
Each subject has 4 rsfMRI scans with 2 mm isotropic voxels, matrix size = $104\times 90$, ${\rm TE=33.1 ms}$ and ${\rm TR = 0.72 s}$ and have been processed by the minimal processing pipeline \cite{Smith2013}.
Moreover, each subject has two diffusion MRI (dMRI) scans that were registered with the $T_{1w}$ MRI data.
The dMRI data has 1.25 mm isotropic voxels with the matrix size being $210\times 180$ and 111 slices.
Moreover, the dMRI data includes 3 non-zero b-values at $b=1000, 2000$, and 3000 ${ \rm s/mm^2}$ with ${\rm TE=89\, ms}$, and ${\rm TR = 5.5\, s}$ \cite{Sotiropoulos2013}. 

The Automated Anatomical Labeling (AAL) atlas \cite{Tzourio-Mazoyer2002} was applied to separate the brain and cerebellum into 120 regions. Then, the mean rsfMRI time series from each brain region is extracted and further processed to remove the mean signal and normalize the standard deviation.
The proposed time and frequency-domain GCM and cGCM methods were applied to analyze the 120-dimensional rsfMRI data, similar to the frequency-domain brain connectivity analysis used in \cite{6905833}.
Both the Akaike information criterion (AIC) and the Bayesian information criterion (BIC) \cite{McQuarrie1998} have been used for model selection. A first-order VAR model was optimal based on both AIC and BIC. The estimated VAR model parameters were applied to estimate the GCM and cGCM measures using the algorithms provided in the Appendix. 
The cGCM measure between two brain regions was computed conditional on the other 118-dimensional time series from all other regions whereas the GCM measures were estimated based on a linear model of the two-dimensional time series. 
All the measures were computed based on the same VAR model of the 120-dimensional time series to ensure the comparisons were consistent as suggested in \cite{Dhamala2018,Barnett2018,Barnett2015}. 

For human brains, the ground true effective connections between brain regions are unknown. But the structural connectivity (SC) via white matter pathways provides the biological substrates for brain effective connectivity \cite{Park2013,Sokolov2018,Sokolov2019}. 
To this end, the multi-fiber tractography algorithm developed in \cite{Malcolm2010,Reddy2016} was applied to the diffusion MRI data to estimate the whole brain fiber bundles. 
Then, the anatomically curated white-matter atlas \cite{Zhang2018} was applied to the tractography results to filter out possibly false connections. Next, the percentage of fiber bundles between each pair of ROIs was computed which was considered as the weight of the SC.

\section{Results}\label{sec:results}

\subsection{Results on simulation experiments}

The second row of Fig. \ref{fig:simulation} illustrates the sample mean fGCM or fcGCM functions of all non-zero connections corresponding to the three network structures illustrated in the first row of Fig. \ref{fig:simulation} with $\lambda_{\rm max}=0.85$. The ME-based fGCM and fcGCM functions have negative values at high-frequency range on average. On the other hand, the original fGCM and fcGCM functions are all positive.
The third row of Fig. \ref{fig:simulation} 
illustrates the receiver operating characteristic (ROC) curves, i.e. TPR vs FPR, when the average values of fGCM and fcGCM functions at $[0~\tfrac\pi2]$ frequency interval were used to detect connections. The abbreviations cGCM-Std, cGCM-SEnt, and cGCM-JEnt represent the standard cGCM Eq. \eqref{eq:cGCMstd}, separate and joint ME estimation-based cGCMs, respectively.
In all three figures, cGCM-SEnt has the highest TPR if FPR is controlled at the same value.

The last row of Fig. \ref{fig:simulation} shows the ROC curve based on the time-domain GCM and cGCM measures. Though cGCM-SEnt still has relative higher accuracy than the other two cGCM methods, their performance were more similar compared to the results in the second row of Fig. \ref{fig:simulation}. 
Moreover, the frequency-domain cGCM-SEnt based results in Figs. \ref{fig:simulation}g and \ref{fig:simulation}h have shown better performance compared to the time-domain measure-based results in Figs. \ref{fig:simulation}j and \ref{fig:simulation}k. 

The Supplementary Materials provide additional results based on different multiple-comparison methods
and the results corresponding to VAR models with $\lambda_{\rm max}=0.6$. All results indicate that frequency-domain cGCM-SEnt can provide the highest TPR with suitable control of the FPR.

\begin{figure}[tbhp]
\centering
\includegraphics[width=1\linewidth]{./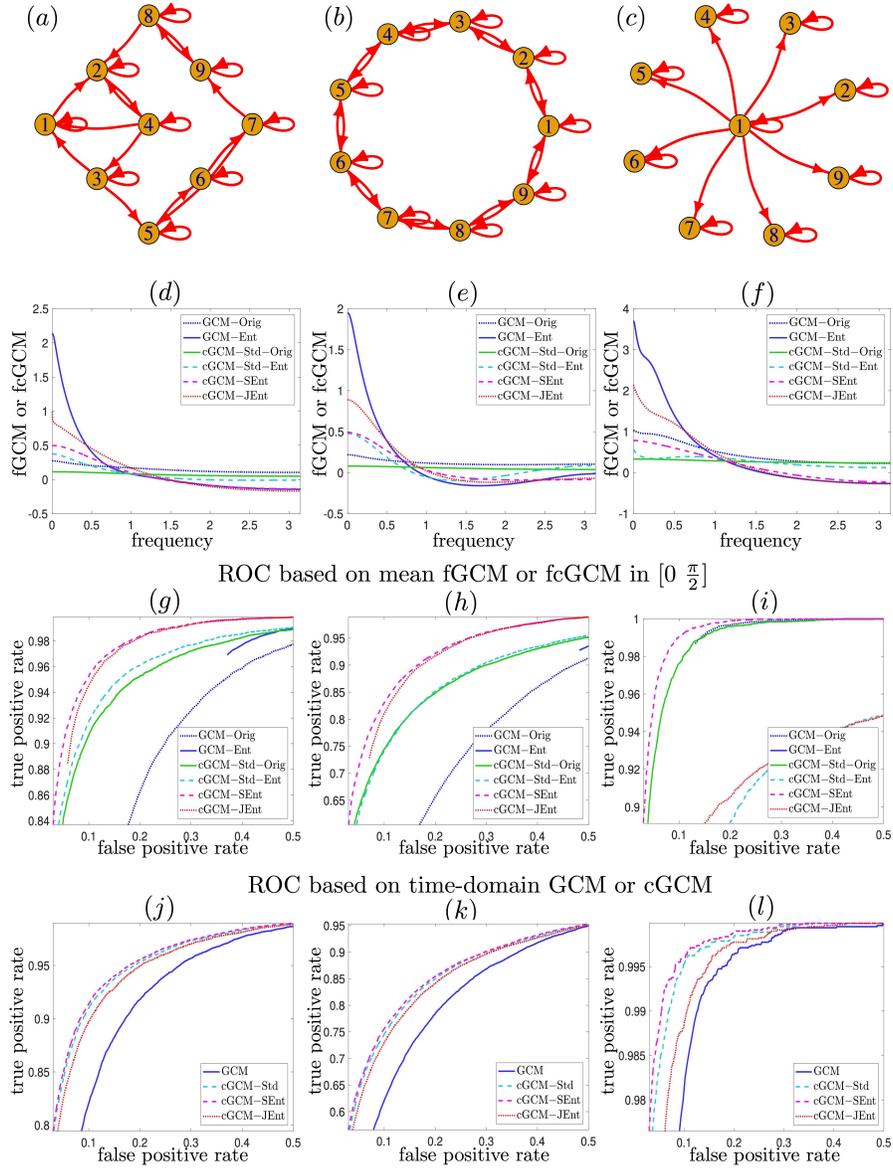}
\caption{Illustration of simulation results corresponding to VAR models with $\lambda_{\rm max} =0.85$. The first row demonstrates the structure of three VAR models used in the simulations. The second row shows sample mean of fGCM and fcGCM functions for all non-zero connections in the first row.
The third row shows the ROC curves based on mean fGCM and fcGCM values in the frequency interval $[0~\tfrac\pi 2]$. The last row illustrates the ROC curve based on time-domain GCM and cGCM measures.
}
\label{fig:simulation}
\end{figure}

\subsection{On correlation with SC}

The bar plots in Fig. \ref{fig:corr} illustrate the correlation coefficients between the SC and different GCM and cGCM measures among 100 subjects.
The first three plots show the correlation coefficients between SC and the GCM, the ME-based fGCM-Ent, and the original fGCM-Geweke method, respectively. 
fGCM-Ent has a significantly higher ($p<10^{-60}$, t-test) correlation with SC than the other two methods.
 On the other hand, the original fGCM-Geweke has a significantly lower correlation with SC than the time-domain GCM ($p<10^{-60}$, t-test).

All three ME-based fcGCM measures have a significantly higher correlation with SC than the corresponding time-domain measures ($p<10^{-40}$, t-test). Moreover, fcGCM-Std-Ent has a significantly higher correlation than the original method fcGCM-Std-Geweke ($p<10^{-50}$, t-test). Thus, the ME-based fcGCM measures can more correctly reflect the passband of the hemodynamic response in the rsfMRI data. Among the three ME-based fcGCM, fcGCM-SEnt has a significantly higher ($p<10^{-7}$, t-test) correlation with SC than the other two methods. More details of the estimation results are provided in the Supplementary Materials.

\begin{figure}
\centering
\includegraphics[width=1\linewidth]{./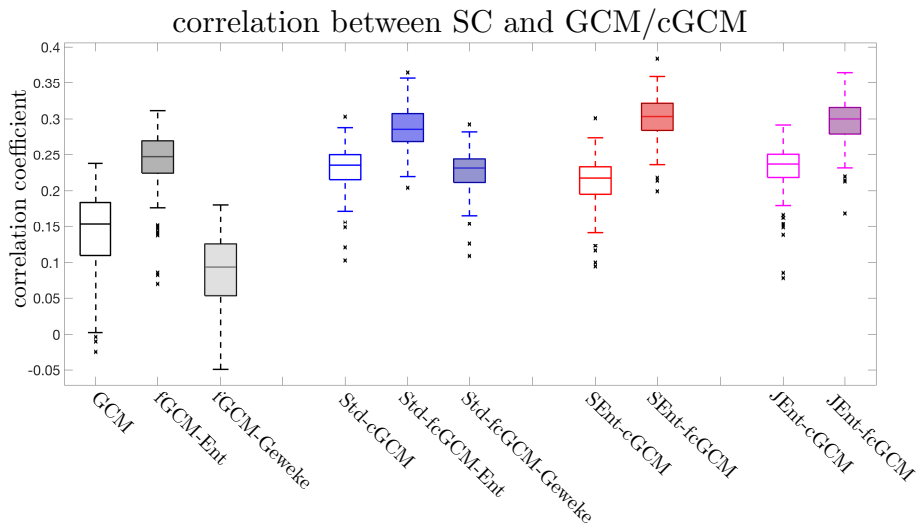}
\caption{The correlation coefficients between SC and different GCM and cGCM measures in whole-brain networks among 100 subjects.}
\label{fig:corr}
\end{figure}

\subsection{Comparison of fGCM and fcGCM}

To further compare the fGCM and fcGCM methods, Fig. \ref{fig:freq} shows these frequency-domain measures for two pairs of brain regions.
On the left panel, Figs. \ref{fig:freq}(c) and (e) illustrate the fGCM and fcGCM functions for the connection from left angular (Angular-L) to the left frontal superior medial cortex (Frontal-Sup-Medial-L), which are both involved in the default mode network\cite{Oliver2019}.
{The dashed lines illustrate the mean values among 100 subjects and the error bars show the range of standard deviations. The constant solid lines illustrate the average values of these functions which are equal to the corresponding time-domain measures. It is noted that the two functions have the same mean.}
In Fig. \ref{fig:freq}(c), fGCM-Ent is significantly higher than fGCM-Geweke in the hemodynamic-related 0.01 and 0.1 Hz range.
Fig. \ref{fig:freq}(e) shows the four fcGCM functions of the Angular-L to Frontal-Sup-Medial-L connection, where
the three ME-based methods, including fcGCM-Std-Ent, fcGCM-SEnt, and fcGCM-JEnt, all have significantly higher values than the original fcGCM-Std-Geweke method between 0.01 and 0.1 Hz. On the other hand, fcGCM-Std-Geweke shows weaker contrast between different frequencies than other ME-based fcGCM methods. {The constant solid lines show the corresponding time-domain measures. The black box shows more details about the difference between the mean values, where fcGCM-JEnt and fcGCM-SEnt have the highest and the lowest values, respectively, which is consistent with Proposition 1.}

The right panel in Fig. \ref{fig:freq} shows the fGCM and fcGCM functions for the connection from the right angular (Angular-R) to the left frontal superior medial cortex (Frontal-Sup-Medial-L). Since there do not exist anatomically plausible direct structural connections between the two regions, the GCM and cGCM measures are expected to have lower values than the results on the left panel.
As expected, the peak magnitudes of fGCM-Ent and fGCM-Geweke in Fig. \ref{fig:freq}(d) are only 28\% and 30\% of the corresponding peak magnitudes shown in Fig. \ref{fig:freq}(c).
In Fig. \ref{fig:freq}(f), the peak magnitudes of fcGCM-Std-Ent, fcGCM-Std-Geweke, fcGCM-SEnt and fcGCM-JEnt are reduced to 3\%, 11\%, 9\%, and 12\% of the corresponding peak magnitudes in Fig. \ref{fig:freq}(e). More significant reductions in cGCM measures indicate that cGCM has better performance in reducing false connections than GCM.

\begin{figure}
\centering
\includegraphics[width=.8\linewidth]{./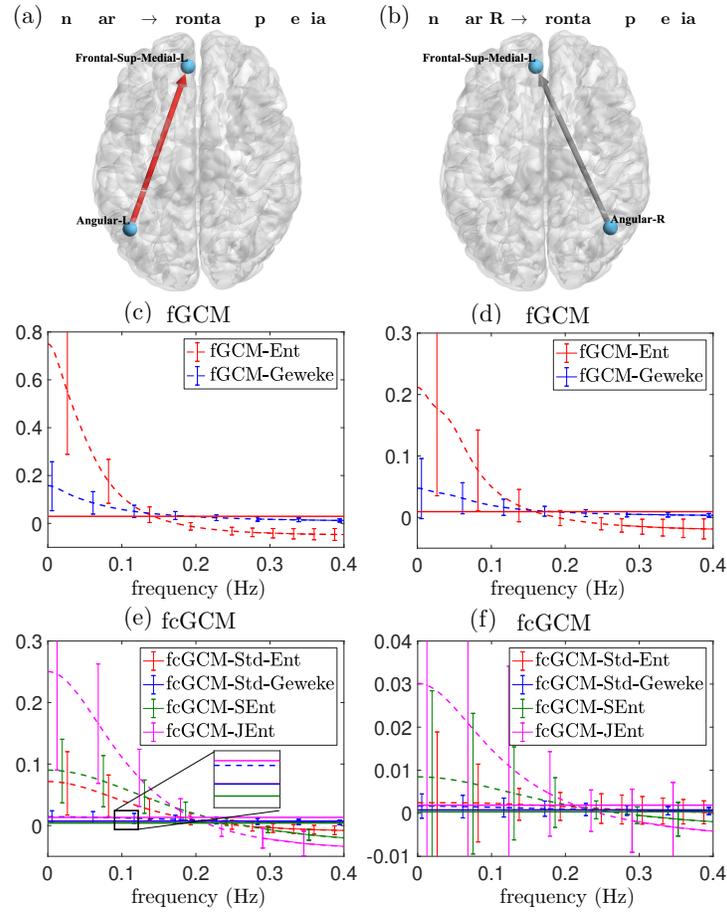}
\caption{Comparison of the fGCM and fcGCM methods on two brain connections. (a) and (b) illustrate two brain connections. (c) and (d) show the comparison of the results of two fGCM methods. (d) and (e) demonstrate the results of different fcGCM methods. The mean values of the fGCM and fcGCM measures are shown in solid lines.}
\label{fig:freq}
\end{figure}

\section{Discussion}\label{sec:discussion}

Experimental results based on simulations have shown that frequency-domain fcGCM-SEnt has enhanced sensitivity and specificity in detecting network connections compared to other frequency-domain measures with a suitable choice of significance level.
The performance of fcGCM-SEnt is also better than the corresponding time-domain method for the first two networks relatively more complex structures. For the star-shaped structure, all time-domain methods have provided high accuracy. 
On the other hand, simulation results provided in Supplementary Materials have shown that time-domain cGCM measures may have lower accuracy than GCM in high-noise situations. But frequency-domain cGCM measures such as fcGCM-SEnt and fcGCM-JEnt still provide the highest level of sensitivity and specificity in this situation.
Moreover, results based on \emph{in vivo} rsfMRI data have shown consistent results that fcGCM-SEnt measures of human brain networks have provided the highest correlation with SC measures obtained using diffusion MRI tractography compared to other measures.  
Furthermore, ME-based fGCM and fcGCM can correctly characterize the bandpass property in the hemodynamic response in rsfMRI signals that cannot be achieved by using the standard methods. Thus, the proposed method can potentially be a useful tool in neuroimaging analysis. The toolbox used to compute the proposed measures is available at the GitHub \url{https://github.com/LipengNing/ME-GCM}.

Below are some discussions that highlight the limitations and future work.

The proposed method is limited by the assumptions of linear models and stationary Gaussian processes. Linear systems cannot accurately model biological measurements \cite{Jirsa2002,Valdes-Sosa2009,Valdes-Sosa2011,Friston2013} or time-varying dynamics in brain networks\cite{Liu2013,Zalesky2014,Ning2019,Ning2020a}. Thus, generalizations of the ME method for nonlinear, non-Gaussian, or nonstationary processes and the nonparametric method for ME estimation will need to be derived in future work.

The main goal of this work is to derive an information-theoretic-based framework for conditional causal inference between multivariate time series. For this purpose, we have reviewed the GCM and cGCM and several existing information-theoretic-based formulations such as the transfer entropy (TE) and directed information (DI) and derived three ME-based measures, and analyzed the relationship between them. 
It should be noted that there are several other methods derived for causal inference including the modified GCM method \cite{Hosoya2017}, the generalized partial directed coherence \cite{4288544} and the directed transfer function \cite{KORZENIEWSKA2003195}. A comprehensive comparison of these methods is beyond the scope of this paper and will be explored in future work.

The proposed solutions for the ME filters are derived based on the assumption that the diagonal entries of the VAR representations, e.g., $\hat G_{xx}(L)$ in Eq. \eqref{eq:ME_filter}, are stably invertible. Though this assumption is typically satisfied in the proposed experiments, rigorous solutions for the optimal filter when the assumption does not hold will need to be derived in future work. 
In this case, it is a hypothesis that the optimal causal filter cannot reduce the entropy to the same level as the unstable or noncausal filter. As a result, the corresponding ME-based GCM in Eq. \eqref{eq:GCM} is lower than the original GCM. 

The rsfMRI experiment has ignored the hemodynamic response functions (HRFs) and measurement noise. It was shown in \cite{10.1162/NECO_a_00828} that the HRFs are non-minimum-phase (NMP) and spatially varying NMP HRFs can distort GCM measures. Thus, heterogeneous HRFs are confounding factors in the correlation analysis with structural connectivity. Moreover, it was also shown in \cite{10.1162/NECO_a_00828} that the distortions of GCM cannot be resolved by using standard Wiener deconvolution filters. Thus, the physiological meaning of the estimated GCM and cGCM measures needs to be further validated. It should be noted that the proposed methods can also be applied to analyze other types of imaging signals such as EEG and MEG, though these modalities may reduce the spatial resolution and limit the correlation analysis with structural connectivity. Further analysis is needed to examine the performance of the proposed method for EEG and MEG.

The meaning and usefulness of negative fGCM and fcGCM based on ME methods need to be further examined and understood. As illustrated in the experiments, ME-based fGCM and fcGCM methods may have negative values which is a major difference from the original methods in \cite{Geweke1982,Geweke1984}. The original fGCM and fcGCM measures were heuristically constructed to satisfy Geweke's requirements that the frequency-domain measures should be nonnegative and their mean value should be equal to the corresponding time-domain measures. On the other hand, the ME-filter minimizes the entropy of the residual processes but may not be able to reduce the power in all frequencies. It may selectively increase the power at some less important frequencies to ensure that the overall entropy rate is minimized. Experimental results based on simulations and \emph{in vivo} rsfMRI data have shown that negative fGCM and fcGCM values typically occur at high frequencies range beyond the passband of the underlying signals. Thus, the sign of ME-based fGCM and fcGCM measures may be sensitive to the intrinsic frequency of the underlying signals. More validations will be implemented in future wor.

{
Finally, it is noted that the standard and separate ME-based cGCM measures require the spectral factorization algorithm for each pair of signals. Although efficient iteration algorithms with an $o(n)$ complexity \cite{Chu2015} have been developed to solve the DARE to obtain spectral factorizations, the number of measures scales as $o(n^2)$ for large-scale networks. Moreover, the computational cost is further increased if random permutation tests are used for statistical inference. Thus, computationally efficient algorithms and statistical analysis methods need to be developed in future work.}

\section{Conclusions}\label{sec:conclusions}
The proposed ME-based methods for conditional causality analysis, especially cGCM-SEnt, provide more sensitive and specific measures than the standard methods in detecting network connections. The ME-based frequency-domain methods can correctly characterize the band-pass property of brain activities using neuroimaging data. Thus, the ME-based method can provide more effective tools for causal inference of direct connections in networks which may be useful tools for neuroimaging analysis. Future work will focus on the derivation of the general solution for ME filters without the assumption of stable diagonal blocks of VAR models and the development of efficient computation algorithms and statistical analysis methods. Moreover, the integration of structural constraints in model estimation will be explored in future work to reduce redundant parameters and improve the reliability of the proposed measures for large-scale time series analysis. Furthermore, nonlinear generalization of the proposed measures as in \cite{Sun2008,6104226,doi:10.1098/rsos.200863,Marinazzo2011} will also be explored in future work.

\appendix

\section{Consistent models for time series using tate-space representations}\label{sec:appendix_1}

This subsection introduces some preliminary results on state-space representation (SSR) for multivariate which are useful to derive models of the subprocess for computing the GCM and cGCM. 
Assume that the multivariate time series $\bu\in \mR^n$ is modeled by the following SSR of the innovation form
\begin{subequations}\label{eq:SS_B}
\begin{align}
\bxi_{t+1} &= A \bxi_{t} + B \beps_t, \\
\bu_t &= C\bxi_t + \beps_t,
\end{align}
\end{subequations}
where $A\in \mR^{m\times m}, B\in \mR^{m\times n}, C\in \mR^{n\times m}$ with $m\geq n$ and $\beps\in \mR^n$ represents the zero-mean white Gaussian innovation process with $\cE(\beps \beps^T) =\Omega$. A general SSR can be transformed to the innovation form in Eq. \eqref{eq:SS_B}
by using the spectral factorization algorithm \cite{Kailath2000}.
Then, the power spectral density (PSD) function of $\bu_t$ can be expressed as
\begin{align}\label{eq:spectfact_general}
S(\theta) = H(e^{i\theta}) \Omega H(e^{i\theta}) ^*,
\end{align}
 where $i=\sqrt{-1}$, and 
 \begin{align}
     H(L) = I_n+C(I_m-AL)^{-1} BL
 \end{align} 
 represents the transfer function from $\beps$ to $\bu$ such that 
 \begin{align}\label{eq:Hyeps}
 \bu = H(L) \beps.
 \end{align}
Moreover, $H(L)$ is a minimum-phase transfer function implying all its zeros are outside of the unit circle.
For convenience, $H(L)$ is represented by
\begin{align}
H(L) \sim
\left(\begin{array}{c|c}
  A & B \\
  \hline
  C & I_n
    \end{array} \right).
\end{align}

The inverse representation of Eq. \eqref{eq:Hyeps} is given by the following VAR representation
\begin{align}
G(L)\bu = \beps,\label{eq:tf_zy},
\end{align}
with $G(L)$ is given by
\begin{align}\label{eq:HinvG}
G(L) \sim \left(\begin{array}{c|c}
  A-BC & B \\
  \hline
  -C & I_n
    \end{array} \right).
\end{align}
If $H(L)$ is a minimum-phase spectral factorization, then $A-BC$ is stable, i.e. the eigenvalues of $A-BC$ are within the unit circle. 
Throughout the paper, all the diagonal blocks of $G(L)$ and other VAR filters are assumed to have stable inversion.
Specifically, it is assumed that $A- \sum_{k\in \cK}b_kc_k^T$ is stable for all subsets $\cK \subseteq \{1, \ldots, n\}$ with $b_k$ and $c_k$ being the k-th columns of $B$ and $C^T$, respectively. 

Based on Eq. \eqref{eq:Hyeps},  the joint process $\bx$ and $\by$ is represented by
\begin{align}\label{eq:xy_sub}
\left[ \begin{matrix} \bx\\ \by \end{matrix} \right] = [I_{n_x+n_y}\; 0] H(L) \beps.
\end{align}
Then the spectral factorization algorithm \cite{Kailath2000} can be applied to the above equation to obtain the following representation
\begin{align}\label{eq:xy_H}
\left[ \begin{matrix} \bx\\ \by \end{matrix} \right] = \hat H(L)\left[ \begin{matrix} \hat \beps_{x} \\ \hat \beps_{y} \end{matrix} \right],
\end{align}
where $\hat H(L)$ is related to $\hat G(L)$ via Eq. \eqref{eq:hatHL}.
{The spectral factorization algorithm requires the solution to a discrete-time algebraic Riccati equation (DARE) which can be obtained by using an iteration algorithm with an $o(n)$ complexity in each iteration \cite{Chu2015} for large-scale systems.}
For the ME process $\bx||\by$ introduced in Eq. \eqref{eq:y12},  the corresponding PSD function given by
\begin{align}\label{eq:S1c2}
S_{\bx||\by}(\theta) = \hat G_{xx}(e^{i\theta})^{-1} \hat \Omega_{xx} \hat G_{xx}(e^{i\theta})^{-*},
\end{align} 
where $X^{-*}=(X^{-1})^*$.

The process $\bx$ can be represented by 
\begin{align}
\bx = [I_{n_x}\; 0] H(L) \beps. 
\end{align}
The spectral factorization algorithm can be applied to compute the covariance matrix $\Sigma_{xx}$ of one-step-ahead prediction error $\bx \mid \X_{-}$.

\section{The original cGCM and ME-based fcGCM}\label{sec:appendix_3}

By using the same method as in Eqs. \eqref{eq:xy_sub}, \eqref{eq:xy_H}, the joint process $(\bx; \bz)$ is represented by
\begin{align}\label{eq:Gy13}
&\left[\begin{matrix} \tilde G_{xx}(L)& \tilde G_{xz}(L) \\ \tilde G_{zx}(L)& \tilde G_{zz}(L) \end{matrix}\right]  \left[\begin{matrix} \bx\\ \bz\end{matrix}\right]  = \left[\begin{matrix} \tilde \beps_{x} \\ \tilde \beps_{z}  \end{matrix}\right],
\end{align}
where
\begin{align}
\cE\left(\left[\begin{matrix} \tilde \beps_{x} \\  \tilde \beps_{z} \end{matrix}\right]
\left[\begin{matrix} \tilde \beps_{x}^T &  \tilde \beps_{z}^T \end{matrix}\right]
\right)
= \left[\begin{matrix} \tilde \Omega_{xx} & \tilde \Omega_{xz} \\ \tilde \Omega_{zx} & \tilde \Omega_{zz} \end{matrix} \right].
\end{align}
Then the ME process $\bx||\bz$ is given by
\begin{align}
\bx||\bz &= \bx +\tilde G_{xx}(L)^{-1} \tilde G_{xz}(L) \bz,\\
&= \tilde G_{xx}(L)^{-1} \tilde \beps_{x}, \label{eq:y1cond3}
\end{align}
with the corresponding PSD being equal to
\begin{align}\label{eq:s_xz}
S_{x||z}(\theta) =  \tilde G_{xx}(e^{i\theta}) ^{-1}\tilde \Omega_{xx} \tilde G_{xx}(e^{i\theta})^{-*} . 
\end{align}

Based on the joint VAR model for $(\bx;\by;\bz)$ in Eq. \eqref{eq:G_abc}, the ME process $\bx||\by\bz$ is represented by
\begin{align}\label{eq:by_123t_G}
\bx||\by\bz &= \bx+G_{xx}^{-1}(L)G_{xy}(L) \by+G^{-1}_{xx}(L)G_{xz}(L) \bz,\nonumber\\
&= G_{xx}^{-1}(L) \beps_{x}
\end{align}
with the corresponding PSD being equal to
\begin{align}\label{eq:S123}
S_{\bx||\by\bz}(\theta) =  G_{xx}^{-1}(e^{i\theta}) \Omega_{xx} G_{xx}^{-*}(e^{i\theta}).
\end{align}

The standard cGCM, i.e. cGCM-Std, can be expressed using the ME process as 
\begin{align}
\F^{\rm Std}_{\by \rightarrow \bx | \bz} &= 2(h(\bx|| \bz) - h(\bx||\by\bz)),\\
 &= \ln \frac{\det \tilde \Omega_{xx}}{\det \Omega_{xx}}.\label{eq:FOrig}
\end{align}
Then, \eqref{eq:s_xz} and \eqref{eq:S123} can be applied to \eqref{eq:fcGCMstd} to compute $\f^{\rm Std-Ent}_{\by\rightarrow \bx \mid \bz}(\theta)$

For comparison, the original fcGCM, i.e. fcGCM-Std-Geweke, is introduced below. 
First, combine Eq. \eqref{eq:Gy13} and Eq. \eqref{eq:Hyeps} to obtain the following equations
\begin{align}
 \left[\begin{matrix} \tilde \beps_{x} \\ \by \\ \tilde \beps_{z}  \end{matrix}\right] &=\left[  \begin{matrix} \tilde G_{xx}(L)& 0 & \tilde G_{xz}(L)\\ 0& I_{n_y}& 0\\  \tilde G_{zx}(L)& 0& \tilde G_{zz}(L)\end{matrix} \right]\nonumber \\
 & \times  \left[  \begin{matrix}H_{xx}(L)& H_{xy}(L)& H_{xz}(L)\\  H_{yx}(L)& H_{yy}(L)& H_{yz}(L)\\  H_{zx}(L)& H_{zy}(L)& H_{zz}(L)\end{matrix} \right]\left[\begin{matrix} \beps_{x}\\ \beps_{y}\\ \beps_{z}\end{matrix} \right],\\
  &=\left[  \begin{matrix}P_{xx}(L)& P_{xy}(L)& P_{xz}(L)\\  P_{yx}(L)& P_{yy}(L)& P_{yz}(L)\\  P_{zx}(L)& P_{zy}(L)& P_{zz}(L)\end{matrix} \right]\left[\begin{matrix} \beps_{x}\\ \beps_{y}\\ \beps_{z}\end{matrix} \right].
 \end{align}
Then, the original fcGCM can be computed as
\begin{align}\label{eq:Geweke_fcGCM}
&\f^{\rm Std-Geweke}_{\by \rightarrow \bx | \bz}(\theta)  = \f^{\rm Geweke}_{\tilde \beps_z \by \rightarrow \tilde\beps_x}(\theta) \\
& = \ln \frac{\det \tilde \Omega_{xx}}{\det (\tilde \Omega_{xx} - [P_{xy}(e^{i \theta})\, P_{xz}(e^{i \theta}) ] \Omega_{yz|x}  [P_{xy}(e^{i \theta})\, P_{xz}(e^{i \theta}) ]^* )},\label{eq:cGCM-Geweke}
\end{align}
where 
\begin{align}
\Omega_{yz|x} = \left[ \begin{matrix} \Omega_{yy}& \Omega_{yz} \\ \Omega_{zy} & \Omega_{zz} \end{matrix}\right] -\left[ \begin{matrix}  \Omega_{yx} \\ \Omega_{zx}\end{matrix} \right]\Omega_{xx}^{-1} \left[ \begin{matrix}  \Omega_{xy} & \Omega_{xz}\end{matrix} \right].
\end{align}
The mean value of $\f^{\rm Std{-}Geweke}_{\by \rightarrow \bx | \bz}(\theta)$ is equal to $\F^{\rm Std}_{\by \rightarrow \bx | \bz}$.

\section{Separate and joint ME-based cGCM}\label{sec:appendix_4}

This subsection introduces the computational algorithms for cGCM-SEnt and cGCM-JEnt in Figs. 1(d) and 1(e).
Similar to Eq. \eqref{eq:Gy13}, the joint process $(\by; \bz)$ is represented by
\begin{align} \label{eq:Gy23}               
&\left[\begin{matrix} \check G_{yy}(L)& \check G_{yz}(L) \\ \check G_{zy}(L)& \check G_{zz}(L) \end{matrix}\right]  \left[\begin{matrix} \by\\ \bz\end{matrix}\right]  = \left[\begin{matrix} \check \beps_{y} \\ \check \beps_{z}  \end{matrix}\right],
\end{align}
where
\begin{align}
\cE\left(\left[\begin{matrix} \check \beps_{y} \\  \check \beps_{z} \end{matrix}\right]
\left[\begin{matrix} \check \beps_{y}^T &  \check \beps_{z}^T \end{matrix}\right]
\right)
= \left[\begin{matrix} \check \Omega_{yy} & \check \Omega_{yz} \\ \check \Omega_{zy} & \check \Omega_{zz} \end{matrix} \right].
\end{align}
Then, the ME process $\by||\bz$ is given by
\begin{align}
\by||\bz &= \by + \check G_{yy}(L)^{-1} \check G_{yz}(L) \bz,\\
&= \check G_{yy}(L)^{-1} \check \beps_{y}. \label{eq:y2cond3}
\end{align}

By definition, cGCM-SEnt and fcGCM-SEnt are equal to
\begin{align}
\F_{\by \rightarrow \bx|| \bz}^{\rm SEnt}&= \F_{\by||\bz\rightarrow \bx||\bz},\\
\f_{\by \rightarrow \bx|| \bz}^{\rm SEnt}(\theta)&= \f_{\by||\bz\rightarrow \bx||\bz}(\theta),
\end{align}
respectively.
The following proposition is useful to derive the computation methods for $\F_{\by \rightarrow \bx|| \bz}^{\rm SEnt}$ and $\f_{\by \rightarrow \bx|| \bz}^{\rm SEnt}(\theta)$ and to prove Proposition 2.

\begin{prop}\label{prop3}
For $\tilde \beps_{x}$ and $\check \beps_{y}$ defined in Eq. \eqref{eq:Gy13} and Eq. \eqref{eq:Gy23}, respectively, The following equations hold.
\begin{align}
\F_{\by||\bz\rightarrow \bx||\bz}&= \F_{\check \beps_y \rightarrow \tilde \beps_x},\label{eq:SEnt}\\
\f_{\by||\bz \rightarrow\bx||\bz}(\theta) &=  \f_{\check \beps_y \rightarrow \tilde \beps_x}(\theta).\label{eq:fSEnt}
\end{align}
\end{prop}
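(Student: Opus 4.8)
The plan is to reduce both identities to a single \emph{filtering-invariance} property of the ME-based causality measures: that $\F_{\by\rightarrow\bx}$ and $\f^{\rm Ent}_{\by\rightarrow\bx}(\theta)$ are unchanged when each of the two processes is passed through its own invertible, minimum-phase causal filter with identity leading coefficient. Equations \eqref{eq:y1cond3} and \eqref{eq:y2cond3} put us in exactly this situation: $\tilde\beps_x = \tilde G_{xx}(L)\,(\bx||\bz)$ and $\check\beps_y = \check G_{yy}(L)\,(\by||\bz)$, so the pair $(\tilde\beps_x;\check\beps_y)$ is obtained from $(\bx||\bz;\by||\bz)$ by the block-diagonal filter $D(L)=\mathrm{diag}(\tilde G_{xx}(L),\check G_{yy}(L))$, which satisfies $D(0)=I$ and has stable inverse by the standing assumption on diagonal VAR blocks. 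Hence, once the invariance property is established, instantiating the generic processes as $\bx=\bx||\bz$, $\by=\by||\bz$ and the filters as $P=\tilde G_{xx}$, $Q=\check G_{yy}$ yields \eqref{eq:SEnt} and \eqref{eq:fSEnt} simultaneously.

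The core step is to show, for $\bx'=P(L)\bx$ and $\by'=Q(L)\by$ with $P,Q$ invertible, minimum-phase and $P(0)=Q(0)=I$, that the ME residual transforms as $\bx'||\by' = P(L)\,(\bx||\by)$. I would argue this by tracking the joint innovation-form VAR. Writing $\hat G(L)(\bx;\by)=\hat\beps$ for the joint model as in \eqref{eq:VARy12}, the filtered pair obeys $\hat G(L)D(L)^{-1}(\bx';\by')=\hat\beps$; since $\hat G(0)D(0)^{-1}=I$ and $\hat G'(L)^{-1}=D(L)\hat G(L)^{-1}$ is stable, the matrix $\hat G'(L):=\hat G(L)D(L)^{-1}$ is the genuine innovation-form VAR of $(\bx';\by')$ with the \emph{same} innovation $\hat\beps$ (the one-step prediction error is unchanged because $D(L)$ is causally invertible with $D(0)=I$, hence preserves the past filtration). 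Its $xx$-block is $\hat G'_{xx}(L)=\hat G_{xx}(L)P(L)^{-1}$, and therefore by \eqref{eq:y12}, $\bx'||\by' = \hat G'_{xx}(L)^{-1}\hat\beps_x = P(L)\hat G_{xx}(L)^{-1}\hat\beps_x = P(L)\,(\bx||\by)$, as claimed.

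With $\bx'||\by'=P(L)(\bx||\by)$ in hand, the frequency-domain identity \eqref{eq:fSEnt} is immediate: $S_{\bx'}(\theta)=P(e^{i\theta})S_{\bx}(\theta)P(e^{i\theta})^*$ and $S_{\bx'||\by'}(\theta)=P(e^{i\theta})S_{\bx||\by}(\theta)P(e^{i\theta})^*$, so both determinants acquire the same factor $|\det P(e^{i\theta})|^2$, which cancels in the ratio defining $\f^{\rm Ent}$ in \eqref{eq:fGCM}. The time-domain identity \eqref{eq:SEnt} then follows by taking the frequency mean, since $\F$ is the mean of $\f^{\rm Ent}$; equivalently, because $\det P$ is a scalar minimum-phase function with $\det P(0)=1$, the Kolmogorov--Szeg\"o relation gives $\frac{1}{2\pi}\int_{-\pi}^{\pi}\ln|\det P(e^{i\theta})|^2\,d\theta=0$, so both entropy rates $h(\bx)$ and $h(\bx||\by)$ are preserved under filtering and the difference in \eqref{eq:GCM} is unchanged. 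The main obstacle is the middle step — verifying that filtering commutes with the ME-residual operation, i.e. that $\hat G'_{xx}=\hat G_{xx}P^{-1}$ and that the innovation is genuinely unchanged — since this is where the block structure of $D$ and the normalization $D(0)=I$ are essential; the determinant bookkeeping on either side is then routine.
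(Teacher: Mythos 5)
Your proposal is correct, but it reaches the key intermediate identity by a genuinely different mechanism than the paper. The paper's proof works directly with the variational definition of the ME residual: it shows that $\tilde F(L) \mapsto \hat F(L) = \tilde G_{xx}(L)^{-1}\tilde F(L)\check G_{yy}(L)$ is a bijection between the causal filters competing in the two entropy-minimization problems (causality of the map in both directions is exactly where the minimum-phase standing assumption on $\tilde G_{xx}$ and $\check G_{yy}$ enters), so the optima correspond and the optimal residuals satisfy $(\bx||\bz)||(\by||\bz) = \tilde G_{xx}(L)^{-1}\bigl(\tilde\beps_x||\check\beps_y\bigr)$, after which the determinant bookkeeping finishes both claims just as in your last paragraph. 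You instead prove a reusable filtering-invariance lemma at the level of the model: for $D(L)=\mathrm{diag}(P(L),Q(L))$ monic, causal, and causally invertible, the innovation-form VAR of the filtered pair is $\hat G(L)D(L)^{-1}$ with unchanged innovations --- your filtration argument for this is sound, and uniqueness of the monic minimum-phase factorization makes it the genuine innovation representation --- so the closed form \eqref{eq:y12} yields $\bx'||\by' = P(L)(\bx||\by)$ directly; instantiating with $P=\tilde G_{xx}$, $Q=\check G_{yy}$ recovers the same residual relation. The trade-off is this: the paper's bijection argument never needs the innovation-form VAR of the derived pair nor the closed-form \eqref{eq:y12} for it, only the definition of the ME residual as an optimum, so it is more self-contained; your route carries slightly heavier prerequisites (you must invoke the paper's blanket assumption that the relevant diagonal block $\hat G_{xx}$ of the derived joint model is stably invertible so \eqref{eq:y12} applies, and you correctly note $\hat G(L)D(L)^{-1}$ must be causal, i.e.\ $P,Q$ minimum-phase), but it is more general and more structurally informative: it exhibits the ME-based GCM and fGCM as invariants under arbitrary block-diagonal monic minimum-phase filtering, which explains at a glance why the SEnt measure reduces to a plain GCM between the innovation processes in \eqref{eq:SEnt}--\eqref{eq:fSEnt}, and it makes explicit, via the mean-value identity $\frac{1}{2\pi}\int_{-\pi}^{\pi}\ln\lvert\det P(e^{i\theta})\rvert^2\,d\theta = 0$, the entropy-preservation step that the paper handles implicitly when it asserts $h(\hat\bx||\by\bz) = h(\tilde\beps_x||\check\beps_y)$ from \eqref{eq:SEnt_y_mu}.
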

\begin{proof}
Let $\hat F(L)$ denote the causal filter that minimizes the entropy of
\begin{align}
\hat \bx||\by\bz  = \bx||\bz - \hat F (L)  \by||\bz.
\end{align}
From Eq. \eqref{eq:y1cond3} and Eq. \eqref{eq:y2cond3}, 
\begin{align}
\hat \bx||\by\bz &= {\tilde G_{xx}(L)}^{-1}\big(\tilde \beps_{x} - \tilde G_{xx}(L)\hat F(L) {\check G_{yy}(L)}^{-1} \check \beps_{y}\big).
\end{align}
Therefore, the entropy of $\hat \bx||\by\bz$ is equal to that of $\tilde \beps_{x} - \tilde G_{xx}(L)\hat F(L) {\check G_{yy}(L)}^{-1} \check \beps_{y}$. 

On the other hand, if $\tilde F(L)$ is a causal filter that minimizes the entropy of 
\begin{align}
\tilde \beps_x|| \check \beps_{y} := \tilde \beps_{x}- \tilde F(L) \beps_{y},
\end{align}
then 
\begin{align}
\hat F(L) = {\tilde G_{xx}(L)}^{-1} \tilde F(L) \check G_{yy}(L)
\end{align}
also minimizes the entropy of $\hat \bx||\by\bz$. 
Therefore, 
\begin{align}
\hat \bx||\by\bz &= {\tilde G_{xx}(L)}^{-1} \tilde \beps_x|| \check \beps_{y},\label{eq:SEnt_y_mu}
\end{align}
which implies
\begin{align}
h(\hat \bx||\by\bz) & = h(\tilde \beps_x|| \check \beps_{y})\label{eq:SEnt_eta_mu}.
\end{align}
On the other hand, Eq. \eqref{eq:y1cond3} and Eq. \eqref{eq:y2cond3} imply that
\begin{align}\label{eq:y13_eta}
h(\bx||\bz) = h(\tilde \beps_x),
\end{align}
which proves Eq. \eqref{eq:SEnt}.

To prove Eq. \eqref{eq:fSEnt}, it is noted that
\begin{align}
\det S_{\bx||\bz}(\theta) &= \det( \tilde G_{xx}(e^{i\theta})^{-1} \tilde  G_{xx}(e^{i\theta})^{-*})  \det \tilde \Omega_{xx},\\
\det S_{\hat \bx|| \by\bz}(\theta) &=  \det( \tilde G_{xx}(e^{i\theta})^{-1} \tilde  G_{xx}(e^{i\theta})^{-*}) \det S_{\tilde \beps_x|| \check \beps_y}(\theta).
\end{align}
Therefore, the following equations hold.
\begin{align*}
\f_{\by||\bz\rightarrow \bx|| \bz}(\theta) &=\ln \frac{\det S_{\by||\bz}(\theta)}{\det S_{\hat \bx||\by\bz}(\theta)} \\
& = \ln \frac{ \det \tilde \Omega_{xx}}{ \det S_{\tilde \beps_x|| \check \beps_y}(\theta)}\\
&=  \f_{\check \beps_y \rightarrow \tilde \beps_x}(\theta),
\end{align*}
which completes the proof.
\end{proof}

Next, the SSR of the joint process $(\tilde \beps_x; \check \beps_y)$ is derived in order to compute $\F_{\check \beps_y \rightarrow \tilde \beps_x}$ and $\f_{\check \beps_y \rightarrow \tilde \beps_x}(\theta)$.
To this end, the following matrices are introduced
\begin{align}
B_a &= [\tilde B_x,0, \tilde B_z], \\
B_b &= [0, \check B_y, \check B_z],
\end{align}
where $\tilde B$ and $\check B$ are the corresponding input matrices of the SSR of the innovation form for $\tilde G(L)$ in Eq. \eqref{eq:Gy13} and $\check G(L)$ in Eq. \eqref{eq:Gy23}, respectively.
$\tilde B_x$ and $\tilde B_z$ represent the first $n_x$ and the last $n_z$ columns of $\tilde B$, $\check B_y$ and  $\check B_z$ represent the first $n_y$ and last $n_z$ columns of $\check B$, respectively. By multiplying the transfer function from $\bu_t$ to $\tilde \beps_{x}$ and $\check \beps_{y}$ and the transfer function from $\beps_t$ to $\bu_t$, the transfer function from $\beps_t$ to $\tilde \beps_{x}$ and $\check \beps_{y}$ can be computed as
\begin{align}\label{eq:Gaug}
H^{\rm aug}(L)\sim \left(\begin{array}{c|c}
   A^{\rm aug} & B^{\rm aug}\\
  \hline
 C^{\rm aug} & D^{\rm aug}
    \end{array} \right),
\end{align}
where
\begin{align}
A^{\rm aug} &={\footnotesize \left[\begin{matrix}A & 0 &0 \\ B_a C&A -\tilde B \tilde C & 0 \\
B_b C& 0 &  A -\check B\check C \end{matrix} \right]},\\
{B^{\rm aug}}&= \left[B;\, B_a;\, B_b \right],\\
C^{\rm aug}&= \left[\begin{matrix}C_x& -C_x & 0\\ C_y& -C_y& 0\end{matrix}\right],\\
D^{\rm aug}&= [I_{n_x+n_y} \, 0 ],
\end{align}
where $C_x$ and $C_y$ are the first $n_x$ rows and the following $n_y$ rows of the measurement matrix $C$.
Next, $\F_{\by||\bz\rightarrow \bx||\bz}$ and $\f_{\check \beps_y \rightarrow \tilde \beps_x}(\theta)$ can be computed by using the spectral factorization algorithm.

To compute the cGCM-JEnt measure illustrated in Fig.1(e), consider the following representation $(\bx^{\rm JEnt} || \bz;  \by^{\rm JEnt} || \bz)$
\begin{align}
\left[\begin{matrix}  \bx^{\rm JEnt} || \bz \\ \by^{\rm JEnt}  ||\bz\end{matrix} \right] &=  \left[  \begin{matrix}G_{xx}(L)& G_{xy}(L)\\  G_{yx}(L)& G_{yy}(L)\end{matrix} \right]^{-1}\left[\begin{matrix} \beps_{x} \\  \beps_{y} \end{matrix} \right].\label{eq:Gtildey}
\end{align}
The SSR of the above transfer function is given by
\begin{align}\label{eq:SS_xy_JEnt}
 \left[  \begin{matrix}G_{xx}(L)& G_{xy}(L)\\  G_{yx}(L)& G_{yy}(L)\end{matrix} \right]^{-1} \sim \left(\begin{array}{c|c}
   A-B_zC_z & [B_x\, B_y] \\
  \hline
 \left[\begin{matrix} C_x\\ C_y \end{matrix}\right] & I_{n_x+n_y}
    \end{array} \right),
\end{align}
where $B=[B_x,B_y, B_z]$ and $C^T=[ C_x^T,  C_y^T, C_z^T]$.

Moreover, the ME process obtained by regressing the past values of $ \by^{\rm JEnt} ||\bz$ from $\bx^{\rm JEnt} || \bz$ is equal to
\begin{align}
\bx^{\rm JEnt} ||\by\bz &= \bx^{\rm JEnt} || \bz - G_{xx}(L)^{-1} G_{xy}(L)  \by^{\rm JEnt} ||\bz,\\
&=G_{xx}(L)^{-1} \beps_{x}.\label{eq:JEnt123}
\end{align}
Thus the PSD function of $\bx^{\rm JEnt} ||\by\bz$ is given by 
\begin{align}
S_{\bx^{\rm JEnt} ||\by\bz}(\theta) &= G_{xx}^{-1}(e^{i\theta}) \Omega_{xx}G^{-*}_{xx}(e^{i\theta}).
\end{align}
On the other hand, the PSD function of $\bx^{\rm JEnt} || \bz$ is equal to
\begin{align}\label{eq:Sxz_a}
S_{\bx^{\rm JEnt} ||\bz}(\theta) &= [I_{n_x},\, 0] \left[  \begin{matrix}G_{xx}(e^{i\theta})& G_{xy}(e^{i\theta})\\  G_{yx}(e^{i\theta})& G_{yy}(e^{i\theta})\end{matrix} \right]^{-1} \left[  \begin{matrix}\Omega_{xx}& \Omega_{xy}\\  \Omega_{yx}& \Omega_{yy}\end{matrix} \right] \nonumber\\
&\times \left[  \begin{matrix}G_{xx}(e^{i\theta})& G_{xy}(e^{i\theta})\\  G_{yx}(e^{i\theta})& G_{yy}(e^{i\theta})\end{matrix} \right]^{-*} \left[\begin{matrix}I_{n_x}\\ 0\end{matrix} \right].
\end{align}

By using the SSR in Eq. \eqref{eq:SS_xy_JEnt} and the spectral factorization algorithm, $S_{\bx^{\rm JEnt} ||\by\bz}(\theta)$ can be factorized as
\begin{align}\label{eq:Sxz_b}
S_{\bx^{\rm JEnt}||\bz}(\theta) &=  P_{ x^{\rm JEnt}||z} (e^{i\theta}) \Sigma_{ x^{\rm JEnt}||z}   P_{ x^{\rm JEnt}||z} (e^{i\theta}) ^*,
\end{align}
where $P_{ x^{\rm JEnt}||z} (e^{i\theta})$ is the unique minimum-phase spectral factor. 
Then, cGCM-JEnt and fcGCM-JEnt are given by
\begin{align}
\F_{\by\rightarrow \bx || \bz}^{\rm JEnt} &= 2(h(\bx^{\rm JEnt} ||\bz) - h(\bx^{\rm JEnt} ||\by\bz) ),\nonumber\\
& = \ln \frac{\det  \Sigma_{ x^{\rm JEnt}||z}   }{\det \Omega_{xx}},\label{eq:FIEnt}\\
\f_{\by\rightarrow \bx || \bz}^{\rm JEnt}(\theta) & = \ln \frac{\det S_{\bx^{\rm JEnt} || \bz}(\theta)}{\det S_{ \bx^{\rm JEnt} ||\by\bz}(\theta)},
\end{align}
respectively. 

{
It is noted that $\f_{\by \rightarrow \bx|| \bz}^{\rm SEnt}$ has the highest computational complexity among the three proposed cGCM methods since it involves a spectral factorization of the augmented system in \eqref{eq:Gaug}. 
Both the ME-based fcGCM $\f^{\rm Std-Ent}_{\by \rightarrow \bx | \bz}$ and the original $\f^{\rm Std-Geweke}_{\by \rightarrow \bx | \bz}$ methods have similar computational complexity since they all involve calculation of the representation for the joint process $(\bx; \bz)$ by using the spectral factorization algorithm. 
The joint ME method $\f_{\by\rightarrow \bx || \bz}^{\rm JEnt}$ has the least computational complexity since all the variables are directly provided by the VAR model in \eqref{eq:G_abc}. Though an accurate solution for the corresponding time-domain measure $\F_{\by\rightarrow \bx || \bz}^{\rm JEnt}$ still requires a spectral factorization alogirhtm to obtain $\det  \Sigma_{ x^{\rm JEnt}||z}$, an approximate value for $\det  \Sigma_{ x^{\rm JEnt}||z}$ can be obtained by a discrete approximation for the geometric mean of $\det S_{\bx^{\rm JEnt} || \bz}(\theta)$ based on \eqref{eq:OmegaPSD}.}

%
%

\balance

\bibliographystyle{IEEEtran}

\bibliography{GCM}

\end{document}


\title{Minimum-entropy causal inference and its application in brain network analysis: supplementary materials}
%
\date{}

\author{Lipeng Ning
\thanks{L. Ning is with the Psychiatry Neuroimaging Laboratory, Department
of Psychiatry, Brigham and Women's Hospital, Harvard Medical School, Boston, MA 02215 USA,
e-mail: lning@bwh.harvard.edu.}}

\markboth{Journal of \LaTeX\ Class Files,~Vol.~xx, No.~xx, April~2021}%
{L. Ning: ME causal inference and its application in brain network analysis}

\maketitle

\makeatletter 
\renewcommand{\thefigure}{S\@arabic\c@figure}
\makeatother

\section*{Additional simulation results}
This section provides additional results for the simulation experiments. Fig. \ref{fig:sim1} is a different version of Fig. 2 of the manuscript where the plots in the last two rows have a different axis scale. The receiver operating characteristic (ROC) plots in the third row of Fig. \ref{fig:sim1} show that the frequency-domain cGCM-SEnt provides higher accuracy for detecting network connections compared to other methods. For the first two network structures, the frequency-domain cGCM-SEnt has higher accuracy than the corresponding time-domain measures. All time-domain methods have high accuracy for relatively simple start-shaped networks.
Fig. \ref{fig:eig2_FPR} provides more details about the trade-off between the false positive rate (FPR) and the true positive rate (TPR) versus the significance level $\alpha$ used as a threshold for p-values without corrections for multiple comparisons. The first two rows of Fig. \ref{fig:eig2_FPR} show the results based on the average fGCM or fcGCM for $\theta\in [0\, \tfrac\pi 2]$. The last two rows show the results based on the time-domain GCM or cGCM. The black dashed lines indicate a significance level of $\alpha=0.5$. It is interesting to see that the plots for FPR of the three structures have similar values. But the corresponding TPR plots of the three network structures are very different. 

Table \ref{table:1A} provides more details about the FPR and TPR for different fGCM and fcGCM measures for the three structures. 
For each measure, three methods were used in statistical testing which includes the Bonferroni \cite{Dunn1961} method for multiple comparisons, the Benjamini-Hochberg procedure \cite{Benjamini1995} for controlling false discovery rate (FDR) and the standard method without corrections for multiple comparisons with $\alpha = 0.05$. The threshold used to reject the null hypothesis gradually increases between the three methods. As a result, the Bonferroni method has the lowest FPR and TPR. For all three structures, cGCM-SEnt and cGCM-JEnt with Bonferroni correction has the highest TPRs with the FPR for cGCM-SEnt slightly lower than 0.05 and the FPR for cGCM-JEnt slightly higher than 0.05. 

Table \ref{table:1B} shows more detailed comparisons of the FPR and TPR based on time-domain GCM and cGCM. The performance of the Bonferroni method is more conservative compared to the results in Table \ref{table:1A} since the FPRs are much lower than $0.05$ for all three structures for the three cGCM methods. The FPR and TPR for the cGCM-SEnt and cGCM-JEnt are all lower than the corresponding results shown in Table \ref{table:1A}.

\begin{table}[h!]
\centering
\resizebox{\textwidth}{!}{
\begin{tabular}{|c |c| c| c| c| c| c|c|c|c|c|c|c|} 
 \hline
 \multicolumn{13}{|c|}{Structure (a)}\\
 \hline
 &\multicolumn{4}{| c |}{frequency-domain GCM}  &\multicolumn{8}{| c |}{frequency-domain conditional GCM} \\ 
 \hline
 & \multicolumn{2}{| c |}{Geweke} &  \multicolumn{2}{| c |}{Ent}&  \multicolumn{2}{| c |}{Geweke} &  \multicolumn{2}{| c |}{Std-Ent} & \multicolumn{2}{| c |}{SEnt} &  \multicolumn{2}{| c |}{JEnt}  \\ \hline
 & FPR & TPR  & FPR & TPR  & FPR & TPR  & FPR & TPR  & FPR & TPR  & FPR & TPR \\ \hline
 Bonferroni & 0.1337 & 0.7716& 0.4128& 0.9798& 0.0016& 0.3537& 0.0018& 0.4251& 0.0264& 0.8165& 0.0699& 0.9126\\ \hline
Benjamini-Hochberg &0.4294& 0.9596&	0.7266&	0.9988&	0.0072&	0.5415&	0.0096&	0.6275&	0.1088&	0.9540&	0.1841&	0.9796\\ \hline
No correction &0.5282&0.9819&	0.7628&	0.9995&	0.0555&	0.8521&	0.0658&	0.8826&	0.2156&	0.9858&	0.2727&	0.9906 \\ \hline
 \multicolumn{13}{|c|}{Structure (b)} \\ \hline
 Bonferroni &     0.1404&    0.5457&    0.5329&    0.9485&    0.0018&    0.1701&    0.0029&    0.2019&    0.0334&    0.6298&    0.0842&    0.7724\\ \hline
Benjamini-Hochberg &0.4532&    0.8807&    0.8425&    0.9969&    0.0065&    0.2831&    0.0108&    0.3451&    0.1401&    0.8762&    0.2188&    0.9263\\ \hline
No correction &0.5470&    0.9337&    0.8683&    0.9982&    0.0606&    0.6636&    0.0771&    0.6957&    0.2403 &   0.9408 &   0.2966&    0.9586 \\ \hline
 \multicolumn{13}{|c|}{Structure (c)} \\ \hline
 Bonferroni &    0.1544&    0.9928&    0.6787&    0.9996&    0.0200&    0.7652&    0.0131&    0.3217 &   0.0267 &   0.8636 &   0.0897 &   0.8501\\ \hline
Benjamini-Hochberg &0.4534&    0.9998&    0.8644&    1.0000 &   0.0656&    0.8975&    0.0328 &   0.4984&    0.0827 &   0.9685  &  0.1911&    0.9014\\ \hline
No correction &0.5589&    1.0000&    0.8839&    1.0000&    0.1850&    0.9949&    0.1437 &   0.8549 &   0.2089 &   0.9990 &   0.2922&    0.9264 \\ \hline
 \end{tabular}}
\caption{Comparison of FPR and TPR related to different average fGCM or fcGCM measures for $\theta\in[0\, \tfrac\pi 2]$ for the three structures shown in the first row of Fig. \ref{fig:sim1} with $\lambda_{\rm max}=0.85$ for the VAR model of each structure.}
\label{table:1A}
\end{table}

\begin{table}[h!]
\centering
\resizebox{\textwidth}{!}{
\begin{tabular}{|c |c| c| c| c| c| c|c|c|} 
 \hline
 \multicolumn{9}{|c|}{Structure (a)}\\
 \hline
 &\multicolumn{2}{| c |}{\multirow{2}{*}{GCM}}  & \multicolumn{6}{| c |}{conditional GCM} \\ 
 \cline{1-1}\cline{4-9}
 & \multicolumn{2}{| c |}{}  &  \multicolumn{2}{| c |}{Std} & \multicolumn{2}{| c |}{SEnt} &  \multicolumn{2}{| c |}{JEnt}  \\ \hline
 & FPR & TPR  & FPR & TPR  & FPR & TPR  & FPR & TPR   \\ \hline
 Bonferroni & 0.0576 &	0.7106&	0.0004&	0.2235&	0.0001&	0.0926&	0.0024&	0.3733\\ \hline
Benjamini-Hochberg &0.2543&	0.9300&	0.0018&	0.3474&	0.0003&	0.1124&	0.0102&	0.5653\\ \hline
No correction &0.3931&	0.9755&	0.0296&	0.7836&	0.0109&	0.6651&	0.0666&	0.8546 \\ \hline
 \multicolumn{9}{|c|}{Structure (b)} \\ \hline
 Bonferroni &     0.0597&    0.4974&    0.0007 &   0.1022&    0.0001&    0.0396&    0.0033&    0.2044\\ \hline
Benjamini-Hochberg &0.3077&    0.8532&    0.0019&    0.1450&    0.0002&    0.0292&    0.0123&    0.3423\\ \hline
No correction &0.4375&    0.9312 &   0.0366&    0.5859 &   0.0144 &   0.4490  &  0.0778&    0.6882 \\ \hline
 \multicolumn{9}{|c|}{Structure (c)} \\ \hline
 Bonferroni &     0.0907&    0.9765 &   0.0023 &   0.7050&    0.0001&    0.4178&    0.0156&    0.8648\\ \hline
Benjamini-Hochberg &0.3354&    0.9975&    0.0082&    0.7171&    0.0005 &   0.5178 &   0.0519 &   0.8744\\ \hline
No correction &0.4741 &   0.9995&    0.0714&    0.9905&    0.0222&    0.9659 &   0.1615 &   0.9962 \\ \hline
 \end{tabular}}
\caption{Comparison of FPR and TPR related to different time-domain GCM or cGCM measures for the three structures shown in the first row of Fig. \ref{fig:sim1} with $\lambda_{\rm max}=0.85$ for the VAR model of each structure}
\label{table:1B}
\end{table}

Fig. \ref{fig:sim2} shows the simulation experiments corresponding to system matrices with $\lambda_{\rm max} =0.6$. 
The second row shows the fGCM and fcGCM functions of the non-zero connections of the simulated network structures.
The third row shows that the frequency-domain cGCM-SEnt and cGCM-JEnt measures have higher accuracy than other measures. The relatively weaker system dynamic reduces the accuracy for all methods compared to results in Fig. \ref{fig:sim1}. The frequency-domain cGCM-SEnt and cGCM-JEnt measures still have higher accuracy than the time-domain methods shown in the last row of Fig. \ref{fig:sim2}. With a relatively high noise level, the cGCM measures have slightly worse performance than the GCM measures for the first two network structures. Fig. \ref{fig:eig1_FPR} shows the corresponding FPR and TPR plots versus the significant levels. All methods have lower FPR and TPR compared to results in Fig. \ref{fig:eig2_FPR}. Similar to results in Fig. \ref{fig:eig2_FPR}, all three structures also have similar FPR plots. The time-domain cGCM-SEnt methods have much lower FPR and TPR compared to the corresponding frequency-domain measures. 

Table \ref{table:2A} and Table \ref{table:2B} illustrate the FPR and TPR with different methods for multiple comparisons with $\lambda_{\rm max} =0.6$ in the underlying systems. The FPR and TPR are all lower than the corresponding results in Table \ref{table:1A} and Table \ref{table:1B}. fcGCM-SEnt and fcGCM-JEnt with the Bonferroni method still provide better performance than other methods.

\begin{table}[h!]
\centering
\resizebox{\textwidth}{!}{
\begin{tabular}{|c |c| c| c| c| c| c|c|c|c|c|c|c|} 
\hline
 \multicolumn{13}{|c|}{Structure A}\\
 \hline
 &\multicolumn{4}{| c |}{frequency-domain GCM}  &\multicolumn{8}{| c |}{frequency-domain conditional GCM} \\ 
 \hline
 & \multicolumn{2}{| c |}{Geweke} &  \multicolumn{2}{| c |}{Ent}&  \multicolumn{2}{| c |}{Geweke} &  \multicolumn{2}{| c |}{Std-Ent} & \multicolumn{2}{| c |}{SEnt} &  \multicolumn{2}{| c |}{JEnt}  \\ \hline
 & FPR & TPR  & FPR & TPR  & FPR & TPR  & FPR & TPR  & FPR & TPR  & FPR & TPR \\ \hline
 Bonferroni & 0.0420&0.3406&	0.2976&	0.8516&	0.0018&	0.1064&	0.0030&	0.1584&	0.0352&	0.5237&	0.0818&	0.6858\\ \hline
Benjamini-Hochberg &0.1598&	0.6176&	0.5965&	0.9663&	0.0041&	0.1116&	0.0075&	0.2211&	0.1269&	0.7855&	0.2020&	0.8635\\ \hline
No correction &0.2945&	0.8156&	0.6493&	0.9766&	0.0628&	0.5349&	0.0777&	0.6386&	0.2368&	0.8927&	0.2902&	0.9181\\ \hline
 \multicolumn{13}{|c|}{Structure (b)} \\ \hline
 Bonferroni &    0.0410&    0.1931&    0.3452&    0.7718&    0.0017&    0.0520 &   0.0029&    0.0812&    0.0373&    0.3942 &   0.0880&    0.5504\\ \hline
Benjamini-Hochberg &0.1608&    0.4484&    0.6807&    0.9529&    0.0036&    0.0500 &   0.0068 &   0.1038  &  0.1414&    0.6773 &   0.2188&    0.7768\\ \hline
No correction &0.3051&    0.6966&    0.7244&    0.9657&    0.0636&    0.3882 &   0.0809&    0.4823&    0.2474&    0.8102&    0.3001&    0.8507 \\ \hline
 \multicolumn{13}{|c|}{Structure (c)} \\ \hline
 Bonferroni &    0.0530&    0.7755&    0.3329&    0.9700&    0.0097&    0.3256&    0.0068&    0.2449 &   0.0192 &   0.6488&    0.0648&    0.8734\\ \hline
Benjamini-Hochberg &0.1815&    0.9181&    0.6704&    0.9966&    0.0293&    0.5014 &   0.0139 &   0.3270  &  0.0663 &   0.8582  &  0.1499&    0.9516\\ \hline
No correction &0.3331&    0.9800&    0.7177&    0.9974&    0.1351&    0.8741 &   0.1071&    0.7751 &   0.1885 &   0.9775 &   0.2541 &   0.9770 \\ \hline
 \end{tabular}}
\caption{Comparison of FPR and TPR related to different average fGCM or fcGCM measures for $\theta\in[0\, \tfrac\pi 2]$ for the three structures shown in the first row of Fig. \ref{fig:sim1} with $\lambda_{\rm max}=0.6$ for the VAR model of each structure.}
\label{table:2A}
\end{table}

\begin{table}[h!]
\centering
\resizebox{\textwidth}{!}{
\begin{tabular}{|c |c| c| c| c| c| c|c|c|} 
 \hline
 \multicolumn{9}{|c|}{Structure (a)}\\
 \hline
 &\multicolumn{2}{| c |}{\multirow{2}{*}{GCM}}  & \multicolumn{6}{| c |}{conditional GCM} \\ 
 \cline{1-1}\cline{4-9}
 & \multicolumn{2}{| c |}{}  &  \multicolumn{2}{| c |}{Std} & \multicolumn{2}{| c |}{SEnt} &  \multicolumn{2}{| c |}{JEnt}  \\ \hline
 & FPR & TPR  & FPR & TPR  & FPR & TPR  & FPR & TPR   \\ \hline
 Bonferroni & 0.0206& 	0.3036&	0.0009&	0.0666&	0.0001&	0.0206&	0.0038&	0.1373\\ \hline
Benjamini-Hochberg &0.0837&	0.5266&	0.0014&	0.0508&	0.0001&	0.0077&	0.0087&	0.1689\\ \hline
No correction &0.2153&	0.7916&	0.0409&	0.4616&	0.0178&	0.3346&	0.0807&	0.5681 \\ \hline
 \multicolumn{9}{|c|}{Structure (b)} \\ \hline
 Bonferroni &     0.0187&    0.1681&    0.0008&    0.0329&    0.0001&    0.0096 &   0.0041&    0.0785\\ \hline
Benjamini-Hochberg &0.0857&    0.3698 &   0.0014&    0.0231&    0.0001&    0.0036&    0.0096 &   0.0917\\ \hline
No correction &0.2323 &   0.6799&    0.0449 &   0.3304 &   0.0202&    0.2224 &   0.0871&    0.4347 \\ \hline
 \multicolumn{9}{|c|}{Structure (c)} \\ \hline
 Bonferroni &     0.0223 &   0.6686 &    0.0015  &  0.2994  &  0.0003 &   0.1303 &   0.0068 &   0.4954\\ \hline
Benjamini-Hochberg &0.0968 &   0.8373 &   0.0039 &   0.3914  &  0.0007&    0.1645 &   0.0216 &   0.6118\\ \hline
No correction &0.2504 &   0.9716 &   0.0575 &   0.8602 &   0.0235  &  0.7455  &  0.1117&    0.9214 \\ \hline
 \end{tabular}}
\caption{Comparison of FPR and TPR related to different time-domain GCM or cGCM measures for the three structures shown in the first row of Fig. \ref{fig:sim1} with $\lambda_{\rm max}=0.6$ for the VAR model of each structure}
\label{table:2B}
\end{table}

\begin{figure}[htb]
\centering
\includegraphics[width=.8\textwidth]{./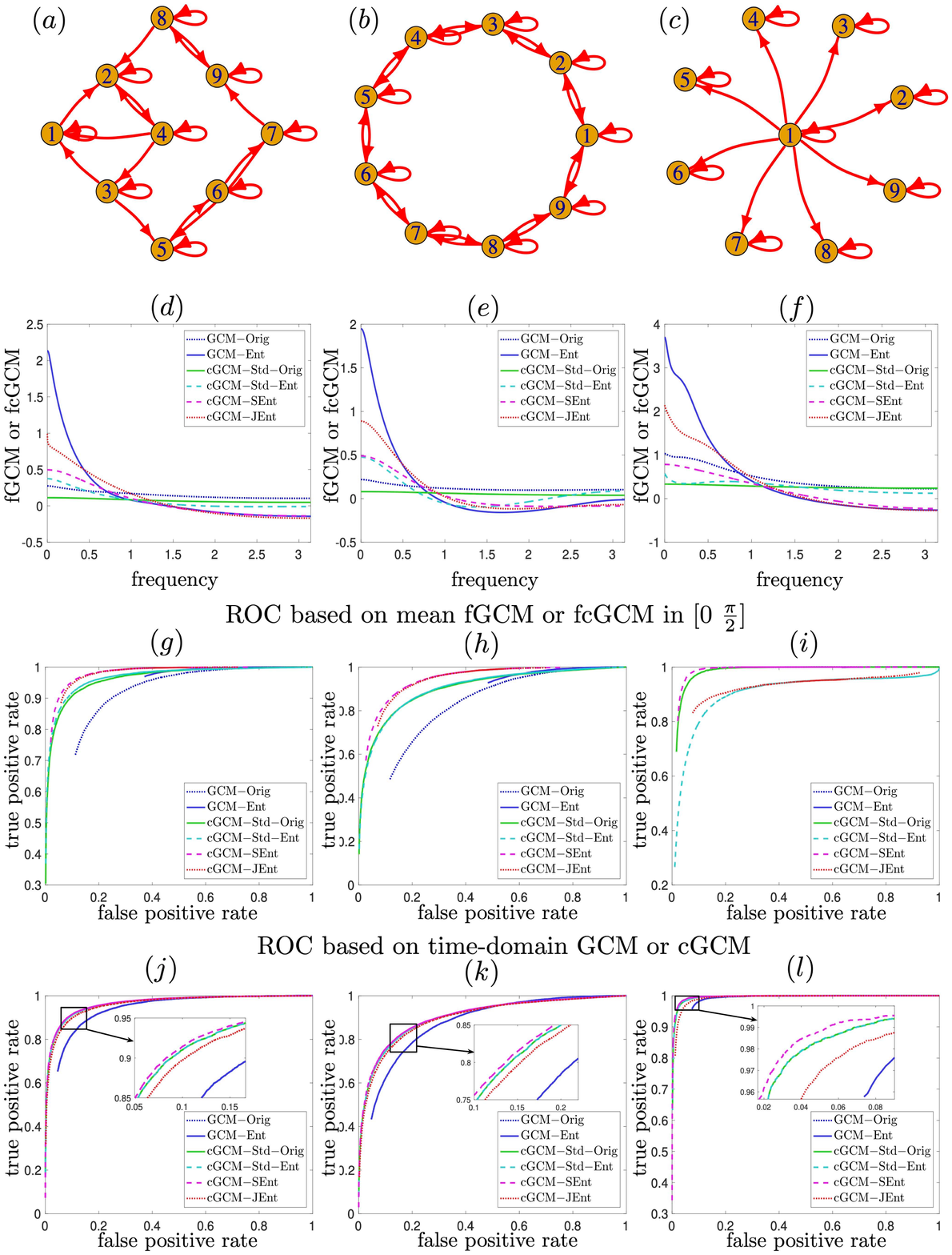}
\caption{Illustration of simulation results corresponding to VAR models with $\lambda_{\rm max} =0.85$. The first row demonstrates the structure of three VAR models used in the simulations. The second row shows the sample mean of fGCM and fcGCM functions for all non-zero connections in the first row.
The third row shows the ROC curves based on mean fGCM and fcGCM values in the frequency interval $[0~\tfrac\pi 2]$.The last row illustrates the ROC curve based on time-domain GCM and cGCM measures.
}\label{fig:sim1}
\end{figure}

\begin{figure}[htb]
\centering
\includegraphics[width=.8\textwidth]{./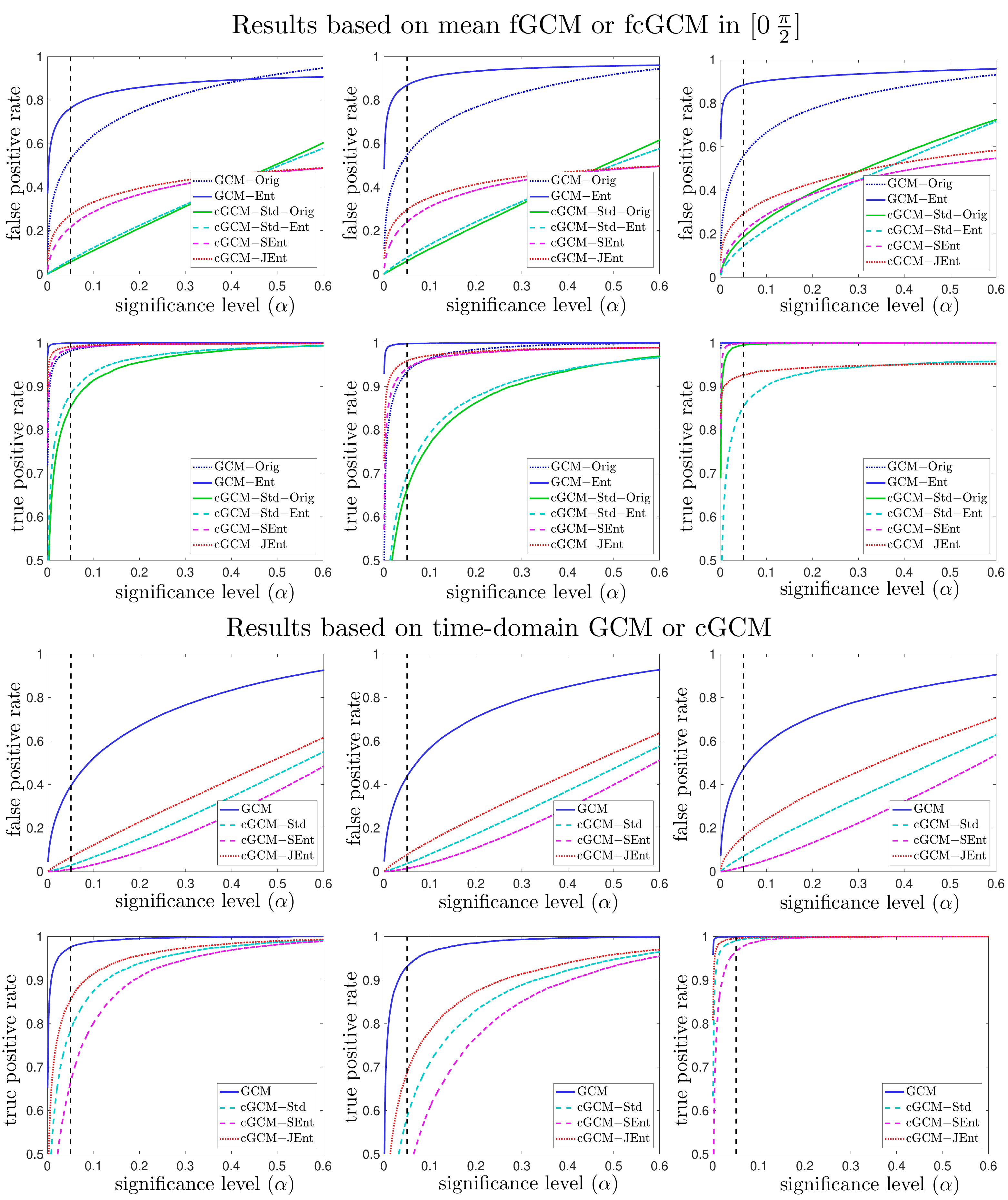}
\caption{Illustration of the FPR and TPR versus the significance level used as the threshold for p-values corresponding to VAR models with $\lambda_{\rm max} =0.85$. Each column corresponds to the three network structures illustrated in the first row of Fig. \ref{fig:sim1} respectively.
The first two rows illustrate results based on the mean fGCM and fcGCM in $[0~\tfrac\pi 2]$. The last two rows show the results corresponding to the time-domain GCM and cGCM.
}\label{fig:eig2_FPR}
\end{figure}

\begin{figure}[htb]
\centering
\includegraphics[width=.8\textwidth]{./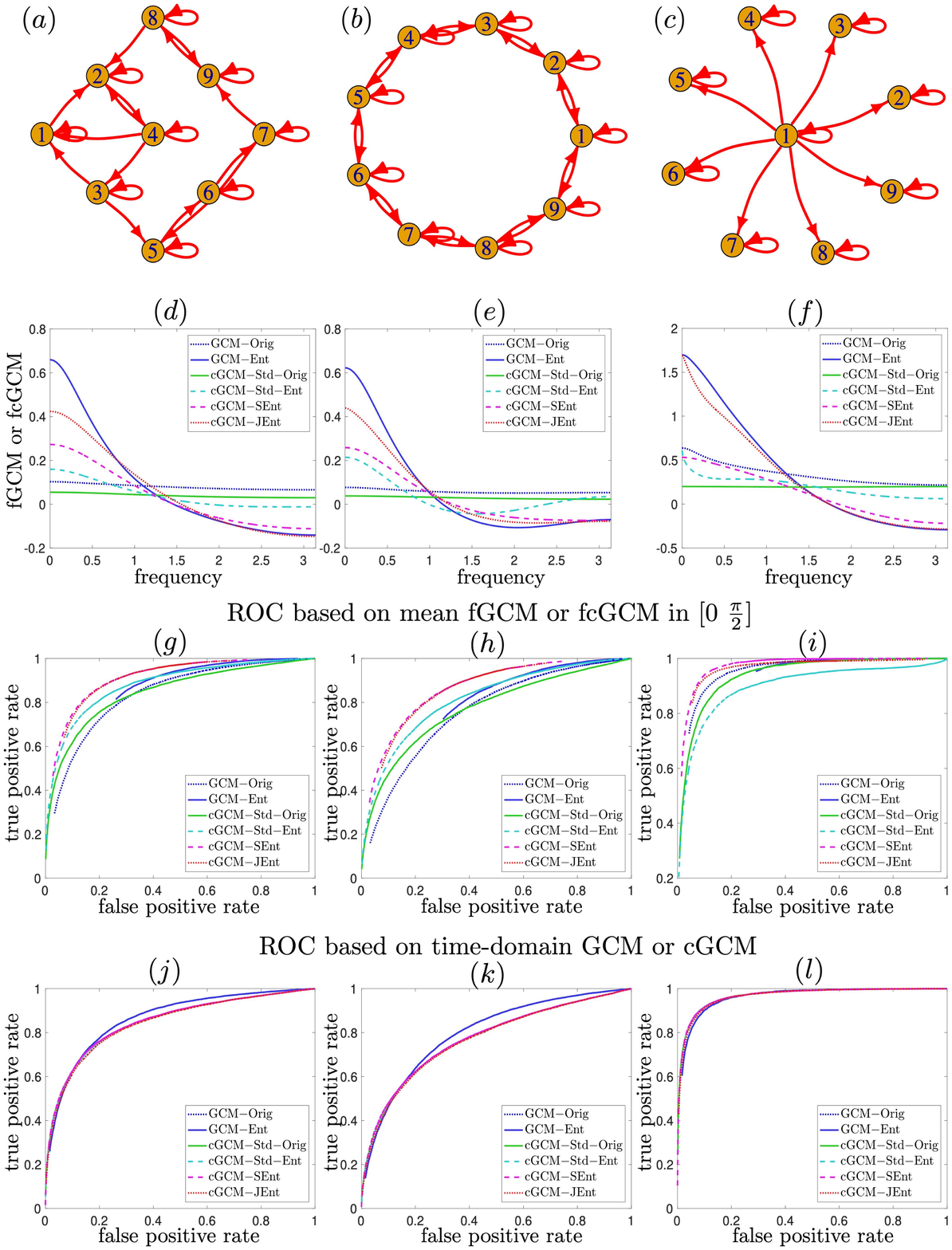}
\caption{Illustration of simulation results corresponding to VAR models with $\lambda_{\rm max} =0.6$. The first row demonstrates the structure of three VAR models used in the simulations. The second row shows the sample mean of fGCM and fcGCM functions for all non-zero connections in the first row.
The third row shows the ROC curves based on mean fGCM and fcGCM values in the frequency interval $[0~\tfrac\pi 2]$. The last row illustrates the ROC curve based on time-domain GCM and cGCM measures.
}\label{fig:sim2}
\end{figure}

\begin{figure}[htb]
\centering
\includegraphics[width=.8\textwidth]{./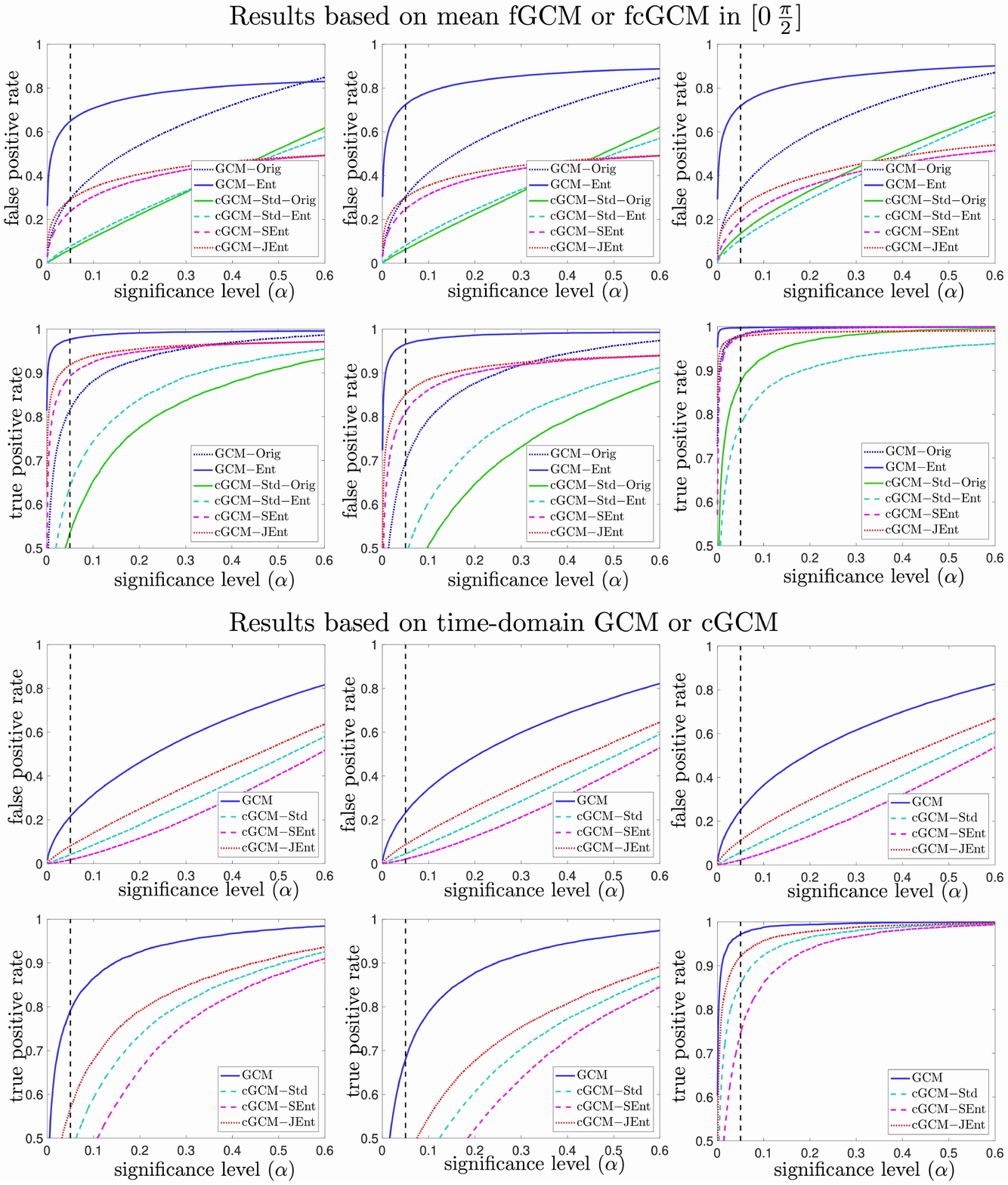}
\caption{Illustration of the FPR and TPR versus the significance level used as the threshold for p-values corresponding to VAR models with $\lambda_{\rm max} =0.6$. Each column corresponds to the three network structures illustrated in the first row of Fig. \ref{fig:sim1} respectively.
The first two rows illustrate results based on the mean fGCM and fcGCM in $[0~\tfrac\pi 2]$. The last two rows show the results corresponding to the time-domain GCM and cGCM.
}\label{fig:eig1_FPR}
\end{figure}

\section*{Additional experimental results on in vivo rsfMRI analysis}

The first row of Fig. \ref{fig:invivo} briefly summarizes the key experimental procedures which include the extraction of rsfMRI time series from each region of interest (ROI) based on the AAL atlas and the estimation of structural connectivity matrix based on diffusion MRI tractography.

Fig. \ref{fig:invivo} illustrates the average SC matrix from 100 subjects between 120 ROIs, which are separated to three groups: left-hemisphere (LH), right-hemisphere (RH), and cerebellum as shown in the horizontal and vertical axes.
Figs. \ref{fig:invivo}(f) and (g) illustrate the average fGCM-Ent and fcGCM-SEnt between each pair of brain regions in the 0.01 to 0.1 Hz frequency range related to the passband of hemodynamic response \cite{Biswal1995,VanDijk2010}. 
The connectivity matrices corresponding to other GCM and cGCM measures are provided in the Supplementary Material.
Figs. \ref{fig:invivo}(h) and (j) show the histogram of the SC, fGCM-Ent, and fcGCM-SEnt connections between different brain regions, including connections within the LH (LH-LH) and the RH (RH-RH), connections between the different regions in the bilateral brain (LH-RH-asym),  symmetric bilateral brain connections (LH-RH-sym), cerebro-cerebellar connections as well as cerebellar connections.
It can be seen that the top 80-100\% strongest connections in fGCM-Ent and fcGCM-SEnt in Figs. \ref{fig:invivo}(i) and (j) contain more bilateral connections than SC in Fig. \ref{fig:invivo}(h).

Figs. \ref{fig:invivo}(k) to (m) demonstrate the top 400 strongest SC, fGCM-Ent and fcGCM-SEnt connections where the thickness of the lines reflects the strength of the connectivity. 
The sum of the weight of all connections from each brain region gives rise to the node weight which is shown in black bins.
In Figs. \ref{fig:invivo}(l) to (m), the colors of the connections are consistent with the color of the brain region that has a stronger causal influence on the other brain region. 
Fig. \ref{fig:invivo}(l) shows that fGCM-Ent has more strong asymmetric bilateral connections whereas Fig. \ref{fig:invivo}(m) shows that fcGCM-SEnt has more strong symmetric bilateral connections.
The significant difference between Fig. \ref{fig:invivo}(k) and  Figs. \ref{fig:invivo}(l), (m) indicates the fiber density from tractography does not fully reflect the strength of effectivity connectivity, particularly for cross-hemispheric connections.
Similar results have been shown in \cite{Deco2014} that adding more cross-hemispheric connections to the SC matrix estimated using dMRI can improve the consistency with functional connectivity.
\begin{figure}[tbhp]
\centering
\includegraphics[width=1\linewidth]{./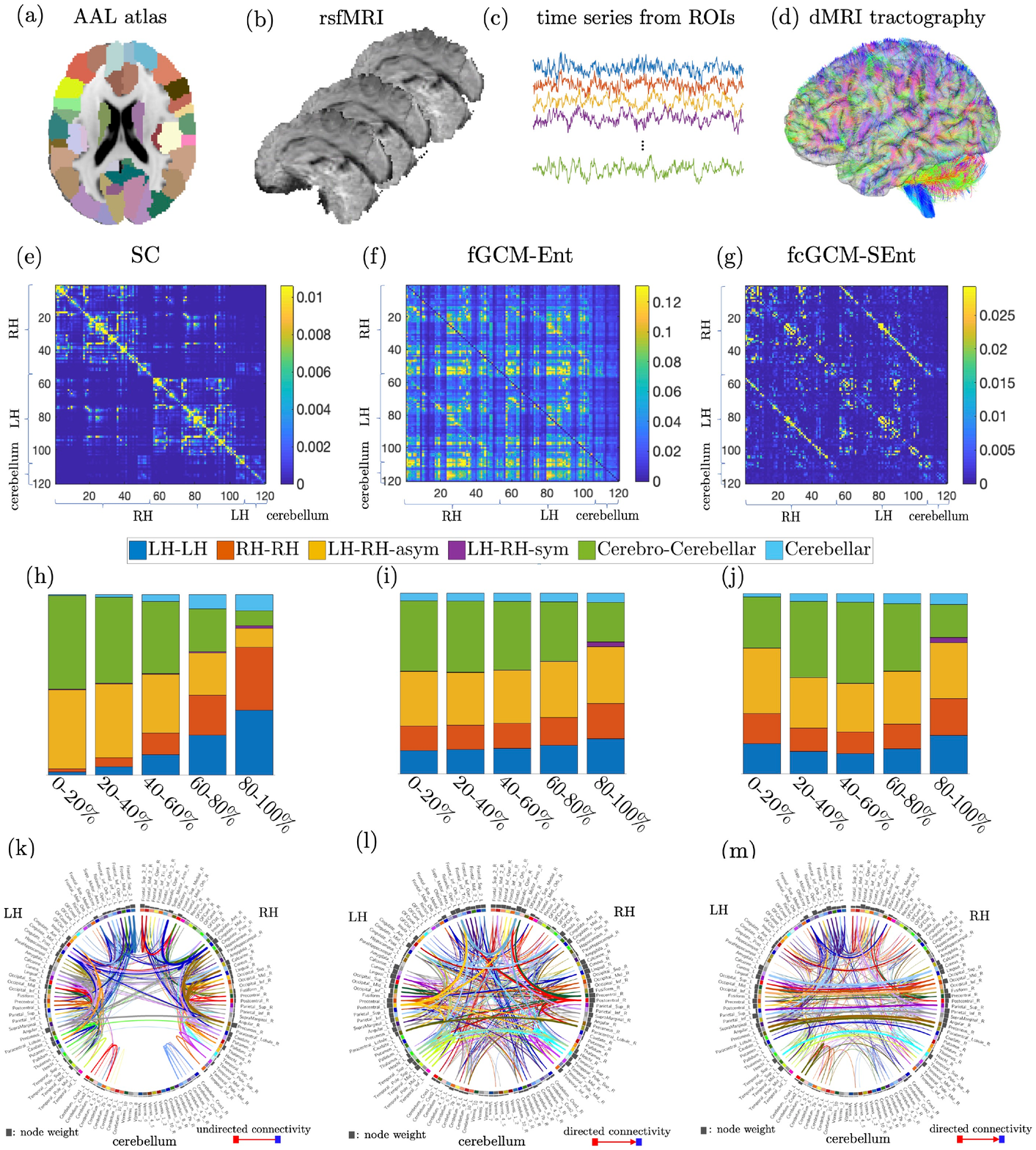}
\caption{Results of in vivo MRI data analysis. (a) illustrates the AAL atlas. (b) shows the rsfMRI volumes. (c) illustrates the rsfMRI time series. (d) illustrates the whole-brain tractography estimated using dMRI. (e) shows the SC matrix between 120 brain regions. (f) and (g) illustrate the mean fGCM-Ent and fcGCM-SEnt between 0.01 and 0.1 Hz. 
(h), (i), (j) show the histogram of SC, fGCM-Ent, and fcGCM-SEnt values for connections between different brain regions.
(k), (l), (m) illustrate the top 400 strongest SC, fGCM-Ent and fcGCM-SEnt connections. 
}
\label{fig:invivo}
\end{figure}

\bibliography{GCM}